\newtheorem{claim}{Claim}[section]
\DeclareMathOperator*{\E}{\mathbb{E}}
\DeclareMathOperator*{\dd}{\mathrm{d}}
\begin{document}


\title{Generalizing Alessandro-Beatrice-Bertotti-Montorsi (AMMB) Models to the Case of Velocity-dependent Dissipation}


\author{Wong Ka Sin, Jamie}
\affiliation{Physics Department, Chinese University of Hong Kong}


\date{\today}

\begin{abstract}
We study a more general class of the Alessandro-Beatrice-Bertotti-Montorsi (AMMB) models with velocity-dependent dissipation. We obtain the Fokker-Planck equation describing the evolution of an arbitrary initial probability distribution, and from there the stationary distribution under constant driving. For this class of models, we show that the distribution of the size of an avalanche is the same as when the dissipation is velocity-independent. As for durations, we show that, under non-stationary driving known as "kicks", although no closed-form solution seems to be available for an arbitrary velocity-dependent dissipation, for large durations the distribution seems to demonstrate an exponential fall-off, while for small durations (under some extra conditions to be made clear in the paper) the distribution seems to show a characteristic power-law behaviour.
\end{abstract}

\pacs{}

\maketitle

\section{Introduction}

The study of how elastic interfaces are driven through a disordered environment is a fascinating topic. Due to the presence of pinning interactions, the motion is haphazard and proceeds in random jumps known as \textit{avalanches}. The basic quantities of interest are the \textit{size} (i.e. Once an avalanche begins for how much the interface moves?) and \textit{duration} (for how long the interface moves?) of an avalanche. It is known that both quantities demonstrate characteristic power-laws \cite{colaiori2008exactly}. Of course, interest in the subject would have been less fervent if its application had been confined to the study of elastic interfaces. On the contrary , one finds that the subject is closely related to the study of Barkhausen crackling noise (i.e. haphazard response of soft/hard magnets to external magnetizing fields)\cite{colaiori2008exactly} and earthquakes \cite{fisher1998collective,jagla2014viscoelastic}.

A famous model within the study of the aforementioned dynamic systems is the Alessandro-Beatrice-Bertotti-Montorsi (ABBM) model \cite{alessandro1990domain,colaiori2008exactly}. Its popularity may be attributed a number of factors, not least of which are its success in capturing the phenomenology and its analytical tractability. In \cite{dobrinevski2012nonstationary}, it is shown that it truly describes the motion of the centre of mass of an interface under the assumption of independent Brownian pinning forces. 

Now the ABBM model describes dissipative dynamics (i.e. the equation of motion contains only a first-order time derivative). In the "standard" model, the dissipative coefficient $\eta$ is independent of the velocity of the interface. One sensible question is how the "standard" ABBM model may be generalised to the case of velocity-dependent dissipation, as many physical systems exhibit velocity-dependent dissipative constants. In \cite{dobrinevski2012nonstationary} it was deemed to be an important step towards more realistic earthquake models, as rocks in real life follow state-and-rate friction laws \cite{scholz1998earthquakes}. With this goal in mind, we shall study the following model,
\begin{eqnarray}
\eta\left[\partial_t h(t)\right] \partial_t h(t) &=& F\left(h(t)\right) - k_0 \left[ h(t) - w(t) \right]
,
\label{eq:basic}
\end{eqnarray} 
where $h(t)$ is the height function, $w(t)$ is the external driving, $\eta\left[\partial_t h(t)\right]$ is a velocity-dependent dissipation coefficient, $F(h)$ is an effective random force - sum of the local pinning forces - which is postulated to be a Gaussian process in the variable $h$ with the following correlation structure,
\begin{eqnarray}
\E \left[ F(h_1) - F(h_2) \right]^2 = 2 \sigma \left| h_1 - h_2 \right|
.
\label{eq:define_F_h}
\end{eqnarray}
In the previous expression $\E$ denotes the expectation operator. We impose the extra condition that the function $\Phi(x) := \eta(x) x$ be continuous and monotonically increasing in $x$, with $\
\Phi(0) = 0$. In the following we shall use the notation $\dot{h} := \partial_t h$. Moreover, a stochastic process (e.g. $h(t)$) has a natural time-dependence, and we shall denote it by either $h(t)$ or $h_t$. For example, the value of $h$ at $t=0$ shall be denoted by $h(0)$ or $h_0$.

Our paper shall be organised as follows. In Section~(\ref{sec:basicConsiderations}), we outline some basic properties of our model. In Section~(\ref{sec:FP}), we derive the Fokker-Planck Equation which describes the time-evolution of an arbitrary initial probability distribution of the velocity. In Section~(\ref{sec:Pstat}), we derive the stationary distribution under constant driving, $w(t) = v t$. In Section~(\ref{sec:size}), we argue that the size distribution is the same as when $\eta$ is velocity-independent. In Section~(\ref{sec:duration}), we study the long-time and short-time behaviours of durations under non-stationary driving, $w(t) = w \delta(t)$. We conclude our paper in Section~(\ref{sec:conclusion}). For ease of presentation, we shall organise our paper in a series of claims and proofs. Note that the proofs are more physically-oriented than are mathematically rigorous.

\section{Basic Considerations}
\label{sec:basicConsiderations}

Before proceeding, let us clarify the differentiability of various quantities in Eq.~(\ref{eq:basic}). We claim that:

\begin{claim}
The function $h(t)$, whose dynamics is governed by Eqs~(\ref{eq:basic}) and (\ref{eq:define_F_h}), is continuous and differentiable in $t$, while $\dot{h}(t)$ is continuous but not differentiable. 
\end{claim}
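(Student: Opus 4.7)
The plan is to regard Eq.~(\ref{eq:basic}) as an implicit definition of $\dot h$ via the invertibility of $\Phi$, and then to read off the regularity of $h$ and $\dot h$ from that of $F(h(t)) - k_0[h(t)-w(t)]$ as a function of $t$. Since $\Phi$ is continuous and strictly monotonically increasing with $\Phi(0)=0$, it admits a continuous inverse $\Phi^{-1}$, so that one can rewrite the equation of motion as
\begin{equation}
\dot h(t) = \Phi^{-1}\!\bigl( F(h(t)) - k_0[h(t) - w(t)] \bigr).
\end{equation}

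First I would establish the regularity of $h$ by a self-consistency argument. Assuming $h$ is continuous in $t$, the composition $F(h(t))$ is continuous in $t$: by Eq.~(\ref{eq:define_F_h}), $F$ is essentially a Brownian motion in the variable $h$, whose sample paths are almost surely continuous. Provided $w$ is continuous in $t$ (the only case in which Eq.~(\ref{eq:basic}) makes classical sense), the displayed equation then shows that $\dot h$ is continuous in $t$, and in particular $h$ is $C^1$. This gives the first half of the claim and closes the self-consistency.

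Next, for the non-differentiability of $\dot h$, the key physical input is that $F$, being a Brownian motion in $h$, is almost surely nowhere differentiable as a function of $h$. On any time interval on which $\dot h$ is bounded away from $0$ -- which, by the continuity of $\dot h$, is the generic situation between isolated zero crossings -- $F(h(t))$ is a $C^1$ time-change of a Brownian motion, and is therefore itself almost surely nowhere differentiable in $t$. Since $k_0[h(t)-w(t)]$ is $C^1$ in $t$, the left-hand side $\Phi(\dot h(t))$ inherits this non-differentiability, and whenever $\Phi$ is $C^1$ the chain rule immediately promotes this to non-differentiability of $\dot h$ itself. Intuitively, any classical time derivative of Eq.~(\ref{eq:basic}) would produce a factor of $\partial_h F$, i.e.\ a white noise, which is not a bona fide function.

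The main obstacle is making this last step fully rigorous while retaining only the hypotheses stated in the paper, namely continuity and strict monotonicity of $\Phi$. In principle $\Phi$ could behave locally like $x^3$, so that $\Phi^{-1}$ would smooth out the irregularity of its argument in some sense; a clean treatment would therefore either invoke a mild extra assumption (e.g.\ $\Phi$ locally bi-Lipschitz, which covers every physically relevant choice of $\eta$), or argue directly that any strictly monotonic continuous transform of a time-changed Brownian motion remains nowhere differentiable. In the informal physical style of the rest of the paper, I would flag this subtlety but not attempt to resolve it.
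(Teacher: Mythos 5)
Your proposal follows the paper's idea for the non-differentiability of $\dot h$ but takes a genuinely different, and weaker, route for the continuity of $h$. The paper rules out discontinuities by contradiction: a step in $h$ at some $t_0$ would change the right-hand side of Eq.~(\ref{eq:basic}) by at most a finite jump, while $\dot h$, and hence the left-hand side $\Phi(\dot h)$, would acquire a delta-like singularity at $t_0$, so the two sides could not match; continuity of $\dot h$ then follows from continuity of the right-hand side together with the continuity and monotonicity of $\Phi$ (equivalently, of $\Phi^{-1}$), exactly as in your displayed rewriting. Your ``self-consistency'' argument only verifies that continuous solutions are consistent with the equation; it does not exclude discontinuous ones, which is precisely what the first half of the claim asserts, so you should add the paper's one-line contradiction (jump in $h$ $\Rightarrow$ delta in $\dot h$ versus bounded right-hand side) to close that gap. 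For the second half, your argument --- $F(h(t))$ is a $C^1$ time change of Brownian motion on intervals where $\dot h>0$, hence almost surely nowhere differentiable in $t$, so $\Phi(\dot h)$ is non-differentiable and, for sufficiently regular $\Phi$, so is $\dot h$ --- is the same mechanism the paper invokes in one sentence, but you make explicit two hypotheses the paper uses silently (that $\dot h$ is bounded away from $0$ for the time-change step, and that $\Phi$ has more regularity than mere continuity and strict monotonicity, so that $\Phi^{-1}$ cannot smooth out the roughness); flagging these, as well as the need for continuous driving $w$ (the paper later uses $w(t)=w\,\delta(t)$, under which $\dot h$ jumps at $t=0$), is a genuine improvement in care over the paper's proof.
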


\begin{proof}
Assume that $h(t)$ has a step-like discontinuity at $t=t_0$. Then, $F(h(t))$, as a brownian motion in $h$, shall have a step-like discontinuity at $h(t_0)$, and the R.H.S. of Eq.~(\ref{eq:basic}) shall have at most a step-like discontinuity, but $\dot{h}(t)$ shall contain a delta-function at $t=t_0$, and so will the L.H.S.  of Eq.~(\ref{eq:basic}), which leads to a contradiction. Thus, $h(t)$ is continuous in $t$. If $h(t)$ is continuous, then the R.H.S. of Eq.~(\ref{eq:basic}) will also be continuous, and this implies $\Phi[\dot{h}(t)]$, and thus, $\dot{h}(t)$ (through the monotonicity and continuity of $\Phi(x)$) are continuous. Note however that $\dot{h}(t)$ is non-differentiable, because $F(h)$, as a brownian motion, is non-differentiable in $h$ almost surely.
\end{proof}

So what are the implications of the above claim? It implies, if one is to differentiate Eq.~(\ref{eq:basic}), one can at most differentiate it in Ito's sense \cite{klebaner2005introduction}. Taking Ito's differential of Eq.(\ref{eq:basic}), one obtains,
\begin{eqnarray}
\dd y &=& \dd F(h) - k_0 \left[ \dot{h} - \dot{w}(t) \right] \dd t 
,
\\
y &=& \Phi(\dot{h})
.
\end{eqnarray} 
What is $\dd F(h)$? It turns out that $\dd F(h) = \sqrt{2 \sigma \left| \dot{h} \right|} \dd W_t$, where $W_t$ is a standard brownian motion with $\E\left[ W_t - W_s \right]^2 = \left|t-s\right|$. 

\begin{claim}
The stochastic process described by the Ito's differential $\dd F\left(h(t)\right)$ is equivalent to the process $\sqrt{2 \sigma \dot{h}} \dd W_t$, where $W_t$ is the standard Brownian motion with $\E\left[ W_t - W_s \right]^2 = \left| t-s \right|$ 
\end{claim}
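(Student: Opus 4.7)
The plan is to recognise $F(h(t))$ as a Brownian motion time-changed by the sample path $h(\cdot)$, and then invoke a variant of the Dambis--Dubins--Schwarz theorem (or equivalently Lévy's characterisation) to rewrite its Ito differential as an integrator against a standard Brownian motion $W_t$. Because Eq.~(\ref{eq:define_F_h}) fixes the full Gaussian law of the increments of $F$ in $h$, one can set $F(h) = F(0) + \sqrt{2\sigma}\,B(h)$ where $B$ is a standard Brownian motion in the variable $h$.

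First, I would compute the conditional quadratic variation of $F(h(t))$ as a process in $t$. By Claim~1, $h(t)$ is continuously differentiable, so on an infinitesimal interval $h(t+\dd t) - h(t) = \dot{h}(t)\,\dd t + o(\dd t)$. Combining this with the defining variance of $F$ in $h$ yields
\begin{equation}
\E\bigl[(F(h(t+\dd t)) - F(h(t)))^2 \mid \mathcal{F}_t\bigr] = 2\sigma\,|h(t+\dd t) - h(t)| = 2\sigma\,|\dot{h}(t)|\,\dd t + o(\dd t),
\end{equation}
where $\mathcal{F}_t$ is the filtration generated by $F(h(s))$ and $h(s)$ up to time $s=t$. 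Meanwhile, the conditional mean of the increment vanishes because $B$ has independent centred Gaussian increments. This identifies the Ito process $F(h(t))$ as having zero drift and quadratic variation $\dd\langle F(h)\rangle_t = 2\sigma\,|\dot{h}(t)|\,\dd t$. Lévy's characterisation then furnishes a standard Brownian motion $W_t$ with $\dd F(h(t)) = \sqrt{2\sigma|\dot{h}(t)|}\,\dd W_t$, which is the desired identity up to the absolute value.

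The main obstacle is that the naive time-change picture is cleanest when $h(t)$ is monotonic in $t$; if $h$ revisits past values, then $F(h(t))$ retraces a portion of its own path and is no longer a martingale in its natural filtration. I would handle this by restricting to the physical regime $\dot{h}\ge 0$, which is precisely the setting targeted by the statement of the claim (note that the absolute value is dropped there). In that regime $h(\cdot)$ is non-decreasing, the time-change is genuine, and one obtains $\dd F(h(t)) = \sqrt{2\sigma\dot{h}(t)}\,\dd W_t$ as stated. A more delicate argument tracking the local time of $h$ at each level would be required to accommodate intervals on which $\dot{h}$ changes sign, but for the ABBM-type applications pursued in this paper that extra generality is not needed.
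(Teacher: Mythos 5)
Your proposal is correct, and at its core it rests on the same computation as the paper's proof: the differentiability of $h(t)$ (Claim 1) turns the defining correlation $\E[F(h_1)-F(h_2)]^2 = 2\sigma|h_1-h_2|$ into an increment variance $2\sigma\,|\dot h|\,\dd t$ over an infinitesimal time step. Where you diverge is in how that computation is packaged. The paper argues directly at the level of infinitesimal increments: both $\dd F(h(s))$ and $\sqrt{2\sigma\dot h}\,\dd W_t$ are centred Gaussians with the same variance and independent increments, hence the processes agree. You instead write $F(h)=F(0)+\sqrt{2\sigma}\,B(h)$, view $F(h(t))$ as a time-changed Brownian motion, compute its quadratic variation, and invoke L\'evy's characterisation (or Dambis--Dubins--Schwarz) to produce the Brownian motion $W_t$. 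Your route is the rigorous formalisation of the paper's heuristic, and it has the additional merit of confronting, inside the proof, the point the paper only flags afterwards: if $h$ is not monotone, $F(h(t))$ retraces its own path, the independent-increments (martingale) property fails, and the identification breaks down, which is why the restriction $\dot h \ge 0$ (Middleton's theorem, Claim 3 of the paper) is needed and why the absolute value can be dropped. One small caveat: the stated correlation structure alone does not pin down $F$ as a Brownian motion in $h$ unless one also takes the increments to be independent (or specifies the full covariance), which both you and the paper implicitly assume; and strictly speaking the representation $\dd M_t=\sqrt{\dd\langle M\rangle_t/\dd t}\,\dd W_t$ requires the quadratic variation to be absolutely continuous in $t$, which is guaranteed here by the continuity of $\dot h$. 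Neither point is a gap relative to the paper's own level of rigour.
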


\begin{proof}
To see this, consider the infinitesimal increments of both processes. Then $\dd F(h(s))$ ($= F\left[ h(s+\delta s) \right] - F\left[h(s)\right]$) is normally-distributed with mean $0$ and second moment $\E\left[ \dd F(h(s)) \right]^2 = 2 \sigma \dot{h}$ (by the differentiability of $h(t)$), while $\sqrt{2 \sigma \left| \dot{h} \right|} \dd W_t$ is normally-distributed with the same mean and same second moment. Moreover, both processes have independent increments.
\end{proof}

Strictly speaking, to make the previous identification, one has to ensure that $\dot{h} \geq 0$ throughout. Within the framework of ABBM model, this is known as Middleton's theorem \cite{middleton1992asymptotic}, which roughly states that for monotonous driving, $\dot{w}(t) \geq 0$, $\dot{h}(t) \geq 0$. 

\begin{claim}[Middleton's Theorem]
For monotonous driving, $\dot{w}(t) \geq 0$, if $\dot{h}(0) \geq 0$, then $\dot{h}(t) \geq 0$. Moreover, suppose that the system starts off with a non-zero velocity $\dot{h}(t=0) > 0$ and $\dot{w}(t) = 0$ for $t > 0$. If $\dot{h}(t_0) = 0$, $\dot{h}(t) = 0$ for $t \geq t_0$.
\end{claim}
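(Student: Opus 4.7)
The plan is to exploit the constraint equation~(\ref{eq:basic}), rewritten as $\Phi(\dot h) = F(h) - k_0(h - w)$, together with the monotonicity of $\Phi$ and the continuity of $\dot h$ established in Claim~1. Since $\Phi$ is continuous and monotonically increasing with $\Phi(0) = 0$, the sign of $\dot h$ coincides with the sign of $y := \Phi(\dot h)$, so it suffices to track the evolution of $y$. Taking the Ito differential of the constraint and using Claim~2, one finds
\begin{eqnarray}
dy &=& \sqrt{2\sigma \dot h}\, dW_t + k_0 \bigl(\dot w(t) - \dot h(t)\bigr)\, dt .
\label{eq:middleton_sde}
\end{eqnarray}

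For the first assertion I would argue by contradiction. Set $\tau := \inf\{t \geq 0 : \dot h(t) < 0\}$ and assume $\tau < \infty$. By the continuity of $\dot h$, $\dot h(\tau) = 0$, hence $y(\tau) = 0$. At $t = \tau$ the noise coefficient $\sqrt{2\sigma \dot h(\tau)}$ in~(\ref{eq:middleton_sde}) vanishes, while the drift reduces to $k_0 \dot w(\tau) \geq 0$ by the assumed monotonicity of the driving. Hence, on an infinitesimal interval $[\tau, \tau + \delta t)$, the stochastic contribution to $y$ is suppressed relative to the non-negative drift, so that $y(\tau + \delta t) \geq 0$ and, by monotonicity of $\Phi$, $\dot h(\tau + \delta t) \geq 0$ --- contradicting the definition of $\tau$.

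For the second assertion I would apply the same SDE with $\dot w \equiv 0$ for $t > 0$. At $t = t_0$, both the noise coefficient $\sqrt{2\sigma \dot h(t_0)}$ and the drift $-k_0 \dot h(t_0)$ vanish simultaneously, so $y \equiv 0$ is a self-consistent solution of~(\ref{eq:middleton_sde}) for $t \geq t_0$. The first part of the theorem forbids $y$ from going negative, while the simultaneous vanishing of drift and diffusion at $y = 0$ leaves no mechanism to push it upward. Hence $y(t) = 0$, i.e.\ $\dot h(t) = 0$, for all $t \geq t_0$.

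The main obstacle is that the infinitesimal analysis at $\tau$ (and at $t_0$) is subtle: the diffusion coefficient $\sqrt{2\sigma \dot h}$ is only H\"older-$1/2$, not Lipschitz, at $\dot h = 0$, in analogy with the Bessel or Cox--Ingersoll--Ross processes. A fully rigorous treatment would invoke a one-dimensional comparison theorem for SDEs with H\"older-continuous coefficients, in the spirit of Yamada--Watanabe. Physically, however, the picture is clear: the noise is quenched precisely where $\dot h$ vanishes, rendering the boundary $\dot h = 0$ reflecting when $\dot w > 0$ and sticky when $\dot w = 0$.
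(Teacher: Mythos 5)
Your proposal is correct and follows essentially the same route as the paper: pass to $y=\Phi(\dot h)$, use the Ito form of the dynamics, and note that at the first zero of $\dot h$ the noise coefficient $\sqrt{2\sigma\dot h}$ vanishes while the drift is non-negative (and vanishes entirely when $\dot w=0$), so $\dot h$ cannot become negative and sticks at zero under zero driving. Your added remark about the H\"older-$1/2$ diffusion coefficient and Yamada--Watanabe-type comparison arguments goes slightly beyond the paper's admittedly heuristic treatment, but the core argument is the same.
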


\begin{proof}
To show that Middleton's theorem holds within our model, write 
\begin{eqnarray}
\dd y &=& \sqrt{2 \sigma \dot{h} } \dd W_t - k_0 \left[ \dot{h} - \dot{w}(t) \right] \dd t 
,
\label{eq:+ve_h_dot}
\\
y &=& \Phi(\dot{h})
\label{eq:define_y}
,
\end{eqnarray} 
which holds supposedly when $\dot{h} \geq 0$. Now assume that we start off with a positive $\dot{h}$ (i.e. $\dot{h}(t=0) > 0$). If $h(t)$ turns negative at a later time, then there must be a time $t_c$ at which $\dot{h}(t_c) = 0$ (due to the continuity of $\dot{h}(t)$). At this $t_c$, Eq.~(\ref{eq:+ve_h_dot}) presumably holds. The first term on the R.H.S. vanishes while the second term is non-negative for $\dot{w}(t) \geq 0$. Therefore, $y$, and thus $\dot{h}$ (through the continuity and monotonicity of $\Phi(\dot{h})$), will be almost surely non-negative immediately after it hits $\dot{h} = 0$, and a negative value of $\dot{h}$ can never be reached. Moreover, if $\dot{w}(t) = 0 $ for $t > 0$, the R.H.S. will be strictly zero, and $\dot{h}(t) \geq 0$ for $t \geq t_c$.
\end{proof}

As a result of Middleton's theorem, we assert that Eq.~(\ref{eq:+ve_h_dot}) always holds for monotonous driving.

\section{Derivation of the Fokker-Planck Equation}
\label{sec:FP}

For those who are not familiar with the machinery of Ito's Calculus, let us state a mnemonic, $\left( \dd W_t \right)^2 = \dd t$, which may be interpreted naively as saying quantities of the order $\left( \dd W_t\right)^2$, which is of the same order as $\dd t$, \textit{cannot} be ignored. Thus, for an arbitrary function $f(W_t,t)$, 
\begin{eqnarray}
\dd f(W_t,t) &=& \partial_t f(W_t,t) \dd t + \partial_{W_t} f(W_t,t) \dd W_t 
\nonumber\\
&&+ \frac{1}{2} \partial_{W_t}^2 f(W_t,t) \dd t
.
\end{eqnarray}
The last term on the R.H.S. of the previous expression is peculiar to a function depending on $W_t$.

Now consider an arbitrary function $\Psi(\dot{h}, t)$. We want to describe the time-evolution of $\Psi(y, t)$ \footnote{Note that $\dot{h} = \Phi^{-1}(y)$}, which is governed by the following Ito's differential,
\begin{eqnarray}
\dd \Psi(y, t) 
&=&
\partial_t \Psi(y, t) \dd t + \partial_{y} \Psi(y,t) \dd y + \frac{1}{2} \partial_{y}^2 \Psi(y,t) \left( \dd y \right)^2
\nonumber\\
&=&
\left\{ \partial_t \Psi(y, t) - k_0 \left[ \dot{h} - \dot{w}(t) \right] \partial_{y} \Psi(y,t) \right\} \dd t 
\nonumber\\
&&+  \sqrt{2 \sigma \dot{h}} \partial_{y} \Psi(y,t) \dd W_t  
\nonumber\\
&&+ \frac{1}{2} \partial_{y}^2 \Psi(y,t) \left( 2 \sigma \dot{h} \right) \dd t
.
\end{eqnarray}
The Feymann-Kac equation, which describes the evolution of the conditional expectation $\Psi(y,t) = \E \left[ f(y_T) | y_t=y \right]$ from $t=t_0$ to $t=T$, where $f(y_T)$ is an arbitrary function of $y_T$, is obtained by setting the portion of $\dd \Psi(y, t)$ proportional to $\dd t$ to $0$,
\begin{equation}
-k_0 \left[ \dot{h} - \dot{w}(t) \right] \partial_y \Psi(y,t) + \sigma \dot{h} \partial^2 \Psi(y,t) + \partial_t \Psi(y,t) 
= 0
.
\end{equation}
The Feymann-Kac equation should be solved with the terminal boundary condition $\Psi(y,T) = f(y)$ (because $ \Psi(y,T) = \E \left[ f(y_T) | y_T=y \right] = f(y)$). We summarize the above in the following claim:

\begin{claim}[Feymann-Kac Equation]
Suppose $t \leq T$. The Feymann-Kac equation, which describes the evolution of the conditional expectation $\Psi(y, t) := \E \left[ f(y_T) | y_t = y \right]$, is given by
\begin{equation}
-k_0 \left[ \dot{h} - \dot{w}(t) \right] \partial_y \Psi(y,t) + \sigma \dot{h} \partial^2 \Psi(y,t) + \partial_t \Psi(y,t) 
= 0
.
\label{eq:FK}
\end{equation}
The previous equation should be solved with the terminal boundary condition $\Psi(y,T) = f(y)$.
\end{claim}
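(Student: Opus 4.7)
The plan is to derive the Feymann-Kac equation by combining Ito's formula applied to $\Psi(y,t)$ with the martingale property that the conditional expectation $\E[f(y_T)\mid\mathcal{F}_t]$ enjoys under the tower rule. The SDE for $y$ is already in hand from the earlier discussion: $\dd y = -k_0[\dot h - \dot w(t)]\,\dd t + \sqrt{2\sigma\dot h}\,\dd W_t$ with $\dot h = \Phi^{-1}(y)$.

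First I would compute $(\dd y)^2$ using the mnemonic $(\dd W_t)^2 = \dd t$ and $\dd W_t\,\dd t = (\dd t)^2 = 0$; only the diffusion term survives squaring, giving $(\dd y)^2 = 2\sigma\dot h\,\dd t$. Then I would apply Ito's formula to $\Psi(y,t)$,
\begin{eqnarray}
\dd\Psi(y,t) &=& \partial_t\Psi\,\dd t + \partial_y\Psi\,\dd y + \tfrac{1}{2}\partial_y^2\Psi\,(\dd y)^2 \nonumber\\
&=& \left\{\partial_t\Psi - k_0[\dot h - \dot w(t)]\partial_y\Psi + \sigma\dot h\,\partial_y^2\Psi\right\}\dd t \nonumber\\
&& +\ \sqrt{2\sigma\dot h}\,\partial_y\Psi\,\dd W_t,
\end{eqnarray}
which cleanly separates the drift (the $\dd t$ bracket) from the martingale part (the $\dd W_t$ term).

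Next I would invoke the martingale structure of $\Psi$. By the tower property, the process $M_s := \E[f(y_T)\mid\mathcal{F}_s] = \Psi(y_s,s)$ is a martingale on $[t_0,T]$. Any semimartingale that is also a martingale must have vanishing finite-variation (drift) part; hence the $\dd t$ bracket above must equal zero almost surely, giving exactly the stated PDE in Eq.~(\ref{eq:FK}). The terminal boundary condition is immediate: setting $t=T$ in the defining expression yields $\Psi(y,T) = \E[f(y_T)\mid y_T=y] = f(y)$.

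The main subtlety — which at our physics-level of rigour we gloss over — is justifying that the $\dd W_t$ integrand $\sqrt{2\sigma\dot h}\,\partial_y\Psi$ lies in $L^2$ so that the stochastic integral is a true martingale rather than merely a local one; this requires sufficient regularity of $\Psi$ and integrability of $\dot h$, and it is precisely here that one would need to impose mild growth conditions on $f$ and on $\eta$. Since $\dot h = \Phi^{-1}(y)$ is a continuous monotone function of $y$, Middleton's theorem ensures $\dot h \geq 0$ so the square root is well-defined, and no further conceptual difficulty arises.
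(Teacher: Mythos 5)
Your proposal is correct and follows essentially the same route as the paper: apply Ito's formula to $\Psi(y,t)$ using $\dd y = -k_0[\dot h - \dot w(t)]\,\dd t + \sqrt{2\sigma\dot h}\,\dd W_t$ and $(\dd y)^2 = 2\sigma\dot h\,\dd t$, then kill the $\dd t$ part and read off the terminal condition $\Psi(y,T)=f(y)$. The only difference is that you make explicit the martingale (tower-property) justification for why the drift must vanish, which the paper leaves implicit when it simply declares that the portion of $\dd\Psi$ proportional to $\dd t$ is set to zero.
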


The Fokker-Planck equation, on the other hand, describes the evolution of the conditional probability distribution $P(y_t=y | y_0)$ (i.e. the probability distribution of $y_t$ given the value of $y_0$). It is easy to obtain it given one knows the form of Feymann-Kac equation.

\begin{claim}[Fokker-Planck Equation]
The Fokker-Planck equation, which describes the forward evolution of the conditional probability distribution $P(y,t) := \textrm{Prob.}(y_t=y | y_0)$, is given by
\begin{equation}
-k_0 \partial_y \left[ \left( \dot{h} - \dot{w}(t)\right) P(y,t) \right]
-\sigma \partial_y^2 \left[ \dot{h} P(y,t) \right] + \partial_t P(y,t) 
= 0
.
\label{eq:FP}
\end{equation}
It should be solved with the initial boundary condition $P(y,0) = P_0(y)$, where $P_0(y)$ is the (initial) distribution function of $y_0$.
\end{claim}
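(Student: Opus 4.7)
The plan is to exploit the formal adjoint relationship between the Feynman-Kac operator (which acts backward in time on conditional expectations) and the Fokker-Planck operator (which acts forward in time on probability densities). The key identity is that, for any terminal time $T > 0$ and any admissible terminal datum $f$, the pairing
\begin{equation}
\int \Psi(y,t)\, P(y,t) \, \dd y \;=\; \E\!\left[ f(y_T) \right]
\label{eq:pairing}
\end{equation}
is independent of $t \in [0,T]$. This follows from the tower property, since $\int \Psi(y,t) P(y,t) \, \dd y = \E\!\left[ \Psi(y_t, t) \right] = \E\!\left[ \E[f(y_T) \mid y_t] \right] = \E[f(y_T)]$, where $P(y,t)$ is by definition the marginal density of $y_t$ under the initial distribution $P_0$.

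Differentiating Eq.~(\ref{eq:pairing}) in $t$ and substituting $\partial_t \Psi = k_0 [\dot h - \dot w(t)] \partial_y \Psi - \sigma \dot h \partial_y^2 \Psi$ from Eq.~(\ref{eq:FK}) gives
\begin{equation}
\int \Psi \, \partial_t P \, \dd y \;=\; - \int P \left\{ k_0 [\dot h - \dot w(t)] \partial_y \Psi - \sigma \dot h \partial_y^2 \Psi \right\} \dd y.
\end{equation}
Integrating by parts in $y$ (noting that $\dot h = \Phi^{-1}(y)$ is a function of $y$ whereas $\dot w(t)$ is not) moves the derivatives off $\Psi$ onto $P$ and yields
\begin{equation}
\int \Psi \left\{ \partial_t P - k_0 \partial_y \left[ (\dot h - \dot w) P \right] - \sigma \partial_y^2 \left[ \dot h P \right] \right\} \dd y \;=\; 0.
\end{equation}
Since the terminal datum $f$, and hence $\Psi(\cdot,t)$, can be chosen as an arbitrary (sufficiently regular) test function, the braced integrand must vanish pointwise in $y$. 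This is exactly Eq.~(\ref{eq:FP}). The initial condition $P(y,0) = P_0(y)$ is built into the construction of $P$.

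The main obstacle is the control of the boundary terms produced during the integration by parts. By Middleton's theorem, $y = \Phi(\dot h)$ is confined to $[0,\infty)$, so one must verify both that $P$, $\Psi$, and their $y$-derivatives decay fast enough as $y \to \infty$ and that no spurious flux survives at $y = 0$, where $\dot h = 0$ but $\dot h - \dot w(t)$ need not vanish. At the physicist-level rigour adopted throughout the paper these assumptions can simply be granted; a more careful analysis would pin down a reflecting or absorbing boundary condition at the origin together with the corresponding boundary term accompanying Eq.~(\ref{eq:FP}).
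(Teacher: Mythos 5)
Your proposal is correct and follows essentially the same route as the paper: use the $t$-independence of the pairing $\int \dd y\, \Psi(y,t) P(y,t)$, substitute the Feynman-Kac equation for $\partial_t \Psi$, integrate by parts, and invoke the arbitrariness of the terminal datum $f$ to conclude the integrand vanishes. Your extra remark on controlling boundary terms at $y=0$ and $y\to\infty$ is a sensible refinement the paper glosses over, but it does not change the argument.
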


\begin{proof}
Noting that $\int \dd y \Psi(y, t) P(y_t=y | y_0) =\int \dd y \E \left[ f(y_T) | y_t=y \right] P(y_t=y | y_0) = \E \left[ f(y_T) | y_0 \right]$, which is independent of $t$, one obtains the relation $\partial_t \int \dd y \Psi(y, t) P(y_t=y | y_0) = 0$. Interchanging the order of differentiation and integration, using Eq.~(\ref{eq:FK}) to replace $\partial_t \Psi(y, t)$, and integrating by parts, one arrives at Eq.~(\ref{eq:FP})
\end{proof}

In the following we shall concentrate on the study of Eq.(\ref{eq:FK}).

\section{Stationary Distribution}
\label{sec:Pstat}

\begin{claim}
The stationary distribution to Eq.(\ref{eq:FK}) under constant driving $\dot{w}(t) = v$ is given by, $P_{\textrm{stat}}(y) = N \exp\left\{ -\frac{k_0}{\sigma} \left[ y - v \int \dd y/\Phi^{-1}(y) \right] \right\} / \dot{h}$, where $N$ is a normalization constant.
\end{claim}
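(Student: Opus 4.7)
The plan is to set $\partial_t P = 0$ and $\dot{w}(t) = v$ in the Fokker-Planck equation (\ref{eq:FP}), reducing it to the ODE
\begin{equation}
-k_0 \partial_y\!\left[(\dot{h} - v)\,P(y)\right] - \sigma\,\partial_y^2\!\left[\dot{h}\,P(y)\right] = 0,
\nonumber
\end{equation}
where throughout we understand $\dot{h} = \Phi^{-1}(y)$. Integrating once in $y$ yields $-k_0(\dot{h} - v)P - \sigma\,\partial_y[\dot{h}P] = J$, where the integration constant $J$ is the stationary probability current. I would then argue $J = 0$ on physical grounds: by Middleton's theorem and $v \geq 0$, the support is $y \geq 0$ (since $\Phi(0) = 0$), and any normalizable stationary density must have vanishing current at the boundaries $y \to 0^+$ and $y \to \infty$, so the constant current $J$ must be zero everywhere.

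Next, I would change variable to $Q(y) := \dot{h}\,P(y)$, which converts the residual first-order equation into the separable form
\begin{equation}
\frac{\partial_y Q}{Q} = -\frac{k_0}{\sigma}\left(1 - \frac{v}{\Phi^{-1}(y)}\right).
\nonumber
\end{equation}
Direct integration gives $\ln Q = -\tfrac{k_0}{\sigma}\bigl[\,y - v\!\int \mathrm{d}y/\Phi^{-1}(y)\bigr] + \mathrm{const}$, and reverting to $P = Q/\dot{h}$ yields the announced form with $N$ fixed by the normalization $\int P_{\textrm{stat}}(y)\,\mathrm{d}y = 1$. A quick sanity check is the standard ABBM case $\Phi(x)=\eta x$: then $\int \mathrm{d}y/\Phi^{-1}(y) = \eta \ln y$ and one recovers (up to the usual change of variables from $y$ to $\dot{h}$) the well-known stationary Gamma distribution, which is reassuring.

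The main obstacle I anticipate is the justification that $J = 0$ and, more delicately, the question of normalizability, which depends non-trivially on the behavior of $\Phi^{-1}(y)$ near $y = 0$. For generic $\Phi$, the factor $1/\dot{h}$ in $P_{\textrm{stat}}$ diverges at $y = 0$, and whether the term $\exp\{(k_0 v/\sigma)\int \mathrm{d}y/\Phi^{-1}(y)\}$ regulates this divergence determines whether a genuine stationary density exists (as it does for the linear ABBM case when $v > 0$). I would therefore state the result as conditional on the integrability of the resulting expression, and flag the boundary behaviour at $y = 0$ as the point that deserves case-by-case scrutiny depending on the chosen dissipation law.
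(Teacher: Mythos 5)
Your proposal is correct and follows essentially the same route as the paper: set $\partial_t P = 0$, interpret the stationary equation as a zero-divergence condition on the probability current, integrate once with $J=0$ at the boundary, and solve the remaining first-order equation (your separation of variables in $Q=\dot{h}P$ is equivalent to the paper's integrating-factor step). Your added justification of $J=0$ and the caveat about integrability of $\int \dd y/\Phi^{-1}(y)$ merely make explicit what the paper states briefly.
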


\begin{proof}
Since we are dealing with stationary distributions, $\partial_t P_{\textrm{stat}}(y) = 0$. Eq.~(\ref{eq:FK}) becomes
\begin{equation}
- k_0 \partial_y \left[ (\dot{h}-v)P_{\textrm{stat}}(y) \right] - \sigma \partial_y^2 \left[ \dot{h} P_{\textrm{stat}}(y) \right] = 0
. 
\end{equation}
The above equation may be interpreted as $\partial_y J(y) = 0$, where $J(y)$ is the probability current. If we do not want a probability current to flow through $y=0$ (i.e. $J(0)=0$), we may simply integrate the above equation with respect to $y$, without introducing an extra integration constant. The stationary distribution follows from simple integrating factor methods. 
\end{proof}

Of course, the expression for $P_{\textrm{stat}}(y)$ holds only when $\int \dd y/\Phi^{-1}(y)$ exists. 

\section{Size Distribution}
\label{sec:size}

The size $S$ of an avalanche is understood as follows. Suppose at $t=0$, we impart a positive velocity $\dot{h}(t=0)$ to a particle. Then, what is the size of $h(t_s)$, where $t_s$ is the first time $\dot{h}$ reaches $0$? Let us go back to Eq.~(\ref{eq:basic}). The L.H.S. equals $\Phi(\dot{h})$, and $\Phi(\dot{h})=0$ iff $\dot{h}=0$. Thus, $h(t_s)$ satisfies $0=F\left[ h(t_s) \right] - k_0 \left[ h(t_s) - w(t_s) \right]$, or, if one sets $S = h(t_s)$, $0=F\left[ S \right] - k_0 \left[ S - w(t_s) \right]$, but the latter is exactly the same equation the random variable $S$ satisfies when $\eta$ is velocity-independent \cite{dobrinevski2012nonstationary}.

\begin{claim}
The distribution of the size of an avalanche $S$ follows the same distribution as when the dissipation coefficient $\eta$ is velocity-independent.
\end{claim}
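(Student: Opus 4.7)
The plan is to characterize $S$ purely through the sample path of the Brownian pinning force $F$, in a manner that makes no reference to the choice of $\eta$. For concreteness, consider the kick protocol with $w(t) = w$ constant for $t > 0$, where the statement is cleanest. Define $t_s := \inf\{t > 0 : \dot h(t) = 0\}$ and set $S := h(t_s)$.

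First I would invoke Middleton's theorem, already established for this model in the previous claim, to conclude that $\dot h(t) \geq 0$ on $[0, t_s]$ and that the velocity remains zero for all $t \geq t_s$. In particular $h$ is monotonically non-decreasing on $[0, t_s]$, sweeping continuously from $h(0)$ up to $S$. Evaluating Eq.~(\ref{eq:basic}) at $t = t_s$ and using $\Phi(0) = 0$ annihilates the left-hand side, yielding the purely algebraic condition $F(S) = k_0 (S - w)$. For $t \in [0, t_s)$ we have $\dot h(t) > 0$, hence $\Phi(\dot h(t)) > 0$ by the assumed strict monotonicity of $\Phi$ at zero, whence $F(h(t)) - k_0[h(t) - w] > 0$. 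Reparametrizing by $h$, which is legitimate on $[0, t_s)$ by the strict monotonicity of $h$, gives $F(h) - k_0(h - w) > 0$ on $[h(0), S)$ and vanishing at $h = S$. Thus $S$ is the first level $h \geq h(0)$ at which the Brownian path $F(h)$ crosses the straight line $k_0(h - w)$.

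This characterization involves only $F$ (whose law is fixed by Eq.~(\ref{eq:define_F_h}) independently of $\eta$) together with $k_0$, $h(0)$, and $w$. An identical first-hitting-level characterization arises in the velocity-independent ABBM analysis of \cite{dobrinevski2012nonstationary}, so the induced laws of $S$ in the two models must coincide.

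The main obstacle I anticipate is technical: one must justify that the first meeting of $F(h)$ with the line $k_0(h - w)$ is a genuine crossing rather than a tangential touching, so that Middleton's argument really does arrest the motion at $S$ and no re-entry into positive velocity is possible. This holds almost surely by standard properties of Brownian sample paths, but deserves a brief remark. A secondary caveat is that for non-kick driving protocols the quantity $w(t_s)$ can itself depend on $\eta$ through $t_s$; the claim is therefore sharpest in the kick setting, where the joint law of $\bigl(F, w(t_s)\bigr)$ is manifestly $\eta$-independent.
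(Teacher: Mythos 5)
Your proposal is correct and follows essentially the same route as the paper: evaluate Eq.~(\ref{eq:basic}) at the stopping time $t_s$, use $\Phi(0)=0$ to kill the left-hand side, and observe that the resulting condition $F(S)=k_0\left(S-w(t_s)\right)$ involves only quantities whose law is independent of $\eta$, hence coincides with the velocity-independent ABBM characterization. Your additional points --- that $S$ must be identified as the \emph{first} crossing of $F(h)$ with the line $k_0(h-w)$ (via Middleton's theorem), and that for general driving $w(t_s)$ could depend on $\eta$ through $t_s$ so the claim is cleanest for kick-type driving --- are refinements the paper glosses over, and they strengthen rather than alter the argument.
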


\section{Duration Distribution Under Nonstationary Driving ("Kicks")}
\label{sec:duration}

The duration $T$ of an avalanche event is understood as follows. Suppose at $t=0$, we impart a positive velocity $\dot{h}(t=0)$ to a particle. Then, what is $t_s$ ($=T$), the time when $\dot{h}(t_s)$ first hits $0$? It turns out that no obvious, closed-form solutions exist for general forms of $\Phi(\dot{h})$. So instead let us investigate whether some common properties exist among the different admissible forms of $\Phi(\dot{h})$. For simplicity we shall exert non-stationary driving, referred to as "a kick" \cite{dobrinevski2012nonstationary},  (i.e. $w(t) = w \delta(t)$). Let us start our investigation with the long-time behaviour of $T$.

We need the following fact for the investigation of duration distribution \cite{colaiori2008exactly}

\begin{claim}[Calculation of Duration Distribution]
The probability distribution for $T$, $P_T(T=t)$, is obtained as follows. Solve Eq.~(\ref{eq:FP}), subjected to $P(y,0) = \delta (y - y_0)$. $P_T(T=t)$ is obtained as the value of the probability current (which should be flowing towards $y=0$) at $y=0$. (i.e. $P_T(T=t) = - k_0  (\dot{h}-\dot{w}(t))P(y=0, t)  - \sigma \partial_y \left[ \dot{h} P(y=0, t) \right]$) The "kick" $w$ is related to the initial value of $\dot{h}$ through $\Phi\left[\dot{h}(0)\right] = k_0 w$.
\label{claim:duration}
\end{claim}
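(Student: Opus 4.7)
The plan is to decompose the claim into three pieces and prove each separately: first I would derive the kick-induced initial condition $\Phi[\dot{h}(0^+)] = k_0 w$; next I would identify $y = 0$ as an absorbing boundary for the post-kick dynamics so that $T$ is the first hitting time of zero starting from $y_0 = k_0 w$; finally I would express the first-hitting-time density as the probability current through $y = 0$. To establish the kick relation I would integrate the Ito SDE in Eq.~(\ref{eq:+ve_h_dot}) across an infinitesimal neighbourhood $[0^-, 0^+]$ of the kick. Starting from rest (so $y(0^-) = \Phi(0) = 0$), the martingale contribution vanishes in the $L^2$ sense, the $-k_0 \dot{h}\, dt$ piece is negligible because $\dot{h}$ remains bounded across an infinitesimal interval, and only $\int_{0^-}^{0^+} k_0 \dot{w}\, dt = k_0 w$ survives. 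This gives $y(0^+) = k_0 w$, i.e.\ $\Phi[\dot{h}(0^+)] = k_0 w$.

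Next, since $\dot{w} \equiv 0$ for $t > 0$, Middleton's theorem guarantees that once $\dot{h}$ reaches $0$ it remains there. Because $y = \Phi(\dot{h})$ with $\Phi$ continuous, strictly increasing, and $\Phi(0) = 0$, the process $y_t$ is correspondingly absorbed at the origin the first time it hits $y = 0$. Hence $T$ is the first-passage time from $y_0 = k_0 w$ to $0$ for the Fokker-Planck dynamics in Eq.~(\ref{eq:FP}) (with $\dot{w} = 0$ after the kick) supplemented by the initial condition $P(y, 0^+) = \delta(y - y_0)$ and an absorbing condition at $y = 0$.

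Finally, let $S(t) := \int_0^{\infty} P(y,t)\, dy = \mathrm{Prob}(T > t)$ be the survival probability, so that $P_T(t) = -dS/dt$. Writing Eq.~(\ref{eq:FP}) in conservation form $\partial_t P + \partial_y J = 0$ with probability current $J(y,t) := -k_0[\dot{h} - \dot{w}(t)] P - \sigma \partial_y[\dot{h} P]$, and integrating over $y \in (0,\infty)$, gives $dS/dt = J(0,t) - J(\infty, t)$. Assuming enough decay at infinity for $J(\infty, t) = 0$, one obtains $P_T(t) = -J(0,t)$, which matches the expression stated in the claim up to the sign convention one adopts for ``current flowing towards $y=0$''.

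The main obstacle I expect is the clean justification of this last step: rigorously one must verify that (a) $J(\infty, t) = 0$, and (b) the boundary flux $-J(0,t)$ is actually well-defined despite the vanishing diffusion coefficient $\dot{h} = \Phi^{-1}(y) \to 0$ at $y = 0$. Both are standard for one-dimensional diffusions with sufficiently regular coefficients, but the non-smooth behaviour of $\Phi^{-1}$ near the boundary forces one to analyse the small-$y$ asymptotics of $P(y,t)$, which depends sensitively on the precise form of $\Phi$, in order to see that $\partial_y[\dot{h} P]$ admits a finite limit at the origin. A cleaner alternative is to reinterpret $-J(0,t)$ directly as the instantaneous rate at which probability mass is absorbed at $y=0$, thereby bypassing the boundary-regularity question altogether.
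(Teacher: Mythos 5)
Your proof is sound at the level of rigor used in this paper, but note that the paper itself does not prove Claim~\ref{claim:duration} at all: it is imported as a known fact from the ABBM literature (the reference cited just above the claim), so your proposal supplies an argument the paper omits rather than paralleling one. Your three-step decomposition is the standard one and each step is correct in substance: (i) the kick relation follows from integrating Eq.~(\ref{eq:+ve_h_dot}) across the kick, with the caveat that you implicitly read the driving as $\dot{w}(t)=w\,\delta(t)$ (a step of size $w$ in $w(t)$), which is the convention that actually reproduces $\Phi[\dot{h}(0^+)]=k_0 w$ even though the paper writes $w(t)=w\,\delta(t)$; (ii) the absorbing character of $y=0$ is exactly the second part of the paper's Middleton claim, and since $\Phi$ is continuous, strictly increasing with $\Phi(0)=0$, the first hitting time of $\dot{h}=0$ and of $y=0$ coincide, so $T$ is a first-passage time; (iii) writing Eq.~(\ref{eq:FP}) in conservation form and differentiating the survival probability gives $P_T(t)=-J(0,t)$, which agrees with the claim's formula once one fixes the sign convention for a current flowing toward $y=0$ (at $y=0$ the drift term drops out since $\dot{h}=\Phi^{-1}(0)=0$, so the flux is purely the $\sigma\,\partial_y[\dot{h}P]$ piece, consistent with the expression $P_T=-\sigma\,\partial_{\dot h}[\dot h P]$ used later in the paper). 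Your flagged caveats --- vanishing of the current at infinity and well-definedness of the boundary flux despite the degenerate diffusion coefficient $\Phi^{-1}(y)\to 0$ --- are genuine technical points, but they are of the same nature as the regularity conditions the paper itself imposes later (regularity of $g_\lambda(y)/\dot h$ at $y=0$ and exponential decay at large $y$), so they do not undermine the argument at the paper's level of rigor; your closing suggestion to read $-J(0,t)$ as the instantaneous absorption rate is indeed the cleanest way to phrase it.
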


\subsection{Long-time Behaviour}

For $w(t)=0$, Eq.~(\ref{eq:FP}) becomes,
\begin{equation}
- k_0 \partial_y \left[ \dot{h} P(y,t) \right] - \sigma \partial_y^2 \left[ \dot{h} P(y,t) \right] + \partial_t P(y,t) = 0
.
\label{eq:FP_zeroDriving}
\end{equation}
Defining the function $G(y,t) = \dot{h} P(y,t)$, performing separation of variables, $G(y,t) = \sum_{\lambda} e^{-\lambda^2 t} r_{\lambda}(y)$, and noting that $\dot{h} = \Phi^{-1}(y)$, one obtains the characteristic O.D.E. for $\lambda$,
\begin{equation}
-k_0 \partial_y r_{\lambda}(y) - \sigma \partial_y^2 r_{\lambda}(y) - \frac{\lambda^2}{\Phi^{-1}(y)} r_{\lambda}(y) = 0
.
\end{equation}
Doing the final transformation $g_{\lambda}(y) = e^{k_0 y /2 \sigma} r_{\lambda}(y)$, one obtains
\begin{equation}
-\sigma \partial_y^2 g_{\lambda}(y) + \sigma \left( \frac{k_0}{2 \sigma} \right)^2 g_{\lambda}(y) = \frac{\lambda^2}{\Phi^{-1}(y)} g_{\lambda}(y)
.
\label{eq:char_ODE}
\end{equation}
The final form of the characteristic O.D.E., Eq~(\ref{eq:char_ODE}), is quite nice and very "amenable" to analysis.

\begin{claim}[Asymptotics of solutions to the Characteristic O.D.E.]
For given $\lambda$, for $y$ sufficiently large, there exist two solutions to Eq.~(\ref{eq:char_ODE}), one monotonically increasing and the other monotonically decreasing.
\end{claim}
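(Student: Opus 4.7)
My plan is to rewrite Eq.~(\ref{eq:char_ODE}) in the standard form
\begin{equation*}
g_\lambda''(y) = Q(y)\, g_\lambda(y), \qquad Q(y) := \left(\frac{k_0}{2\sigma}\right)^2 - \frac{\lambda^2}{\sigma\, \Phi^{-1}(y)},
\end{equation*}
and exploit the positivity of $Q$ at infinity. Since $\Phi$ is continuous and strictly monotone with $\Phi(0)=0$ (and implicitly $\Phi^{-1}(y)\to\infty$ as $y\to\infty$, without which the statement's ``$y$ sufficiently large'' would be vacuous), $Q(y)\to (k_0/2\sigma)^2>0$. Choose $y_*$ so that $Q(y)\geq Q_0 := \tfrac12(k_0/2\sigma)^2$ for all $y\geq y_*$; the whole argument lives on $[y_*,\infty)$.

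For the monotonically increasing solution $g_+$, I would fix initial data $g_+(y_*)=g_+'(y_*)=1$ and run a simple barrier argument. At any $y\geq y_*$ where $g_+(y)>0$, one has $g_+''(y)=Q(y)g_+(y)>0$, so $g_+'$ is strictly increasing there; a bootstrap then shows that $g_+$ and $g_+'$ remain positive throughout $[y_*,\infty)$ (neither can first vanish without contradicting $g_+''>0$ at that point), yielding monotonicity.

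The decreasing solution $g_-$ is more delicate. I would construct it as a \emph{principal solution}: for each $n>y_*$ let $g_n$ solve the ODE with $g_n(y_*)=1$ and $g_n(n)=0$; the barrier argument (applied to $-g_n$ near $y=n$) forces $g_n>0$ and $g_n'<0$ on $[y_*,n)$. Using Sturm comparison with the constant-coefficient model $u''=Q_0 u$, derive a uniform bound on $g_n'(y_*)$, and extract a subsequential limit $g_-$ that is positive, monotonically decreasing, and linearly independent of $g_+$. The main obstacle will be making this limit procedure rigorous --- specifically, showing that $g_n'(y_*)$ stays uniformly bounded away from both $0$ and $-\infty$ so that $g_-$ is a nontrivial decaying solution rather than collapsing to a trivial one or reproducing $g_+$. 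A more concrete alternative is the Liouville--Green (WKB) route: take $\hat g_\pm(y)=Q(y)^{-1/4}\exp\left(\pm\int_{y_*}^y\sqrt{Q(t)}\,\dd t\right)$ as trial solutions, verify that the residuals are integrable at infinity (because $Q$ tends smoothly to a positive constant), and invoke a Volterra fixed-point argument to produce exact solutions asymptotic to $\hat g_\pm$; the required monotonicity is then encoded in $g_\pm'(y)/g_\pm(y)\to \pm k_0/(2\sigma)$.
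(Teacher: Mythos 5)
Your proposal is correct, and for the increasing solution it coincides with the paper's argument: both observe that $Q(y)=\left(\frac{k_0}{2\sigma}\right)^2-\frac{\lambda^2}{\sigma\,\Phi^{-1}(y)}$ becomes positive for large $y$ (the paper, like you, implicitly needs $1/\Phi^{-1}(y)$ to drop below $\sigma k_0^2/(4\sigma^2\lambda^2)$, i.e.\ essentially $\Phi^{-1}(y)\to\infty$), and then use convexity of positive solutions to bootstrap that a solution started with positive value and slope stays increasing. Where you genuinely depart from the paper is the decreasing solution. The paper simply takes the initial data $\bigl( g_{-}(y_0),g_{-}'(y_0)\bigr)=(1,-1)$ and asserts monotone decrease (the text even misprints ``stays increasing''), which is not a valid argument: the decaying solution spans only a one-dimensional subspace, so a generic solution with negative initial slope eventually turns around and grows like $e^{k_0 y/2\sigma}$; the paper then falls back on the heuristic constant-coefficient limit ``when $1/\Phi^{-1}(y)$ decays sufficiently fast.'' Your principal-solution construction (limits of boundary-value problems $g_n(y_*)=1$, $g_n(n)=0$, with the convexity argument giving $g_n>0$, $g_n'\le 0$, plus comparison bounds to pass to a nontrivial limit) actually closes this gap and proves existence of a genuinely nonincreasing solution under no extra decay hypothesis, which is stronger than what the paper establishes. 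One caution on your WKB alternative: $Q$ tending to a positive constant does \emph{not} by itself make the Liouville--Green residual integrable (e.g.\ $\Phi^{-1}(y)\sim\log y$ gives a non-$L^1$ perturbation), so the precise asymptotics $g_{\pm}\sim e^{\pm k_0 y/2\sigma}$ need the same ``sufficiently fast decay'' proviso the paper invokes; your first construction does not suffer from this and suffices for the claim as stated.
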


\begin{proof}
Since $\Phi(\dot{h})$ is monotonically increasing, so is $\Phi^{-1}(y)$ and $1/\Phi^{-1}(y)$ is monotonically decreasing. Thus for large enough $y > y_0$, $\sigma \left( \frac{k_0}{2 \sigma} \right)^2 g_{\lambda}(y) \gg \frac{\lambda^2}{\Phi^{-1}(y)} g_{\lambda}(y)$, and 
\begin{equation}
\sigma \partial_y^2 g(y) = \left\{  \sigma \left( \frac{k_0}{2 \sigma} \right)^2 - \frac{\lambda^2}{\Phi^{-1}(y)} \right\} g_{\lambda}(y) > 0
.
\end{equation}
Consider two solutions $g_{+}(y)$, $g_{-}(y)$ with $( g_{+}(y_0),g^{'}_{+}(y_0) )  = (1, 1)$, and $( g_{-}(y_0),g^{'}_{-}(y_0) )  = (1, -1)$. Then, since $\partial_y^2 g_{+}(y) > 0$, $g_{+}(y)$ stays increasing while $g_{-}(y)$ stays increasing. Indeed when $1/\Phi^{-1}(y)$ decays sufficiently fast, Eq.~(\ref{eq:char_ODE}) becomes
\begin{equation}
\partial_y^2 g(y) \approx  \left( \frac{k_0}{2 \sigma} \right)^2 g_{\lambda}(y)
,
\end{equation}
and one may claim futher that $g_{+}(y) \sim \exp\left( \frac{k_0}{2 \sigma} y \right)$ and $g_{-}(y) \ \exp\left( -\frac{k_0}{2 \sigma} y \right)$.

\end{proof}

Since $P(y,t) = G(y,t)/\dot{h} = \left( 1/\dot{h} \right)\sum_{\lambda} e^{-\lambda^2 t} r_{\lambda}(y) = \left( 1/\dot{h} \right)\sum_{\lambda} e^{-\lambda^2 t} e^{-k_0 y /2 \sigma} g_{\lambda}(y)$, and one expects the probability distribution $P(y,t)$ to be integrable, it seems that

\begin{claim}
In solving Eq.~(\ref{eq:char_ODE}), one should impose the boundary conditions that $g_{\lambda}(y) / \dot{h} $ be regular at $y=0$ and $g_{\lambda}(y)$ be exponentially decaying for large $y$.
\end{claim}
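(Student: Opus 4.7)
The plan is to analyze the single-mode ansatz $P(y,t) = (1/\dot h)\, e^{-k_0 y/(2\sigma)}\, g_\lambda(y)\, e^{-\lambda^2 t}$ implied by the separation of variables above, and to impose the minimal physical requirements that $P$ be a non-negative, integrable, and finite probability density on the velocity ray $y \in [0,\infty)$. The two boundary conditions will fall out of the behaviour of this ansatz at the two ends of the ray.

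For the large-$y$ end I would invoke the preceding asymptotic claim, which supplies two linearly independent solutions $g_+$ and $g_-$ behaving like $\exp(\pm k_0 y/(2\sigma))$ once $1/\Phi^{-1}(y)$ is negligible compared to $\sigma (k_0/2\sigma)^2$ in Eq.~(\ref{eq:char_ODE}). Because $\dot h = \Phi^{-1}(y)$ is monotonically increasing, $1/\dot h$ stays bounded (in fact bounded away from zero) at large $y$, so the prefactor $(1/\dot h)\, e^{-k_0 y/(2\sigma)}$ can decay at most at rate $k_0/(2\sigma)$. Selecting $g_+$ would exactly cancel this decay and leave $P(y,t)$ non-integrable at infinity; only $g_-$ yields an integrable tail, so one must keep the exponentially-decaying mode.

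For the boundary at $y=0$ I would use the hypothesis $\Phi(0)=0$, which by the continuity and monotonicity of $\Phi$ forces $\dot h = \Phi^{-1}(y) \to 0$ as $y \to 0^+$, so $1/\dot h$ diverges. Since the exponential prefactor is regular and equals $1$ at the origin, in order that $P(y,t)$ itself carry no spurious singularity at $y=0$, and, in turn, that the boundary probability current entering the duration formula of Claim~\ref{claim:duration} be finite there, the numerator $g_\lambda(y)$ must vanish at the same rate as $\dot h$; equivalently, $g_\lambda(y)/\dot h$ must be regular at $y=0$.

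The step I expect to be the most delicate is precisely this regularity condition at the origin, because $y=0$ is simultaneously the physical endpoint of the velocity ray (enforced by Middleton's theorem) and a singular point of the diffusion coefficient $\dot h$ appearing in Eq.~(\ref{eq:FP_zeroDriving}). A fully rigorous treatment would have to classify this endpoint in the Feller sense for each admissible $\Phi$; in the physically-oriented spirit of the paper, however, ``regular'' should be read as the natural non-divergent choice, which reduces to the familiar ABBM boundary condition $g_\lambda(0)=0$ in the linear case $\Phi(\dot h)=\eta \dot h$.
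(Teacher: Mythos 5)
Your proposal follows essentially the same route as the paper, which in fact offers no formal proof of this claim at all—only the preceding remark that $P(y,t) = (1/\dot h)\, e^{-k_0 y/(2\sigma)} g_\lambda(y)\, e^{-\lambda^2 t}$ must be an integrable probability density—so your write-up is simply a fleshed-out version of that same integrability/regularity argument at the two ends of the velocity ray. One parenthetical to correct: $1/\dot h = 1/\Phi^{-1}(y)$ is \emph{not} in general bounded away from zero at large $y$ (e.g.\ the paper's System L2 has $\Phi^{-1}(y) = e^{a y}-1$), which weakens your ``decays at most at rate $k_0/(2\sigma)$'' remark, though it does not change the boundary conditions you conclude.
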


For example, for $\Phi(\dot{h})$ such that $\Phi^{-1}(y) \sim y$ for small $y$, one may obtain a Frobenius  series solution for Eq.~(\ref{eq:char_ODE}), and the correct $g_{\lambda}(y) \sim y$. For arbitrary $\lambda$, both boundary conditions cannot be satisfied simultaneously, and one is led to the fact,

\begin{claim}
The eigenvalues $\lambda^2$ to Eq.~(\ref{eq:char_ODE}) are discrete.
\end{claim}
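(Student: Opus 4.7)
The plan is to recognise Eq.~(\ref{eq:char_ODE}) as a singular Sturm-Liouville eigenvalue problem on the half-line $[0,\infty)$. Written in the form
\begin{equation}
- \sigma g_{\lambda}''(y) + \sigma \left( \frac{k_0}{2 \sigma} \right)^2 g_{\lambda}(y) = \lambda^2 \cdot \frac{1}{\Phi^{-1}(y)} g_{\lambda}(y),
\end{equation}
it is Sturm-Liouville with positive weight $w(y) = 1/\Phi^{-1}(y)$, and the two conditions of the preceding boundary-condition claim (regularity of $g_{\lambda}/\dot{h}$ at $y=0$ and exponential decay at infinity) play the role of separated boundary conditions at the two endpoints. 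Discreteness of the spectrum for this class of problems is classical; my task is to sketch a physics-level shooting/Wronskian argument adapted to the present singular setting.

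First I would fix $\lambda^2 \in \mathbb{R}$ and note that the ODE is linear and second-order, so its solution space is two-dimensional. By the preceding asymptotic claim, for large $y$ the two independent solutions behave like $\exp(\pm k_0 y / 2\sigma)$, so the exponential-decay condition at infinity selects a one-dimensional subspace (a single solution up to scale), which I call $g^{\infty}_{\lambda}(y)$. At the other end, the requirement that $g_{\lambda}/\Phi^{-1}(y)$ be regular at $y=0$ likewise picks out a one-dimensional subspace $g^{0}_{\lambda}(y)$: for instance, when $\Phi^{-1}(y) \sim y$ near the origin, a Frobenius analysis makes $y=0$ a regular singular point and forces $g_{\lambda}(0)=0$, a single linear constraint on the two-dimensional solution space.

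Next, $\lambda^2$ is an eigenvalue precisely when these two one-dimensional subspaces coincide, i.e., when $g^{0}_{\lambda}$ and $g^{\infty}_{\lambda}$ are proportional. Equivalently, the Wronskian
\begin{equation}
W(\lambda^2) := g^{0}_{\lambda}(y) \, (g^{\infty}_{\lambda})'(y) - (g^{0}_{\lambda})'(y) \, g^{\infty}_{\lambda}(y)
\end{equation}
vanishes; by Abel's theorem (the ODE has no $g'$ term) $W$ is independent of $y$, and with natural normalisations $g^{0}_{\lambda}$ and $g^{\infty}_{\lambda}$ depend analytically on the parameter $\lambda^2$, so $W$ itself is an entire function of $\lambda^2$. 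An entire function is either identically zero or has only isolated zeros. Identical vanishing is excluded, because for sufficiently large $\lambda^2$ the solution shot out from $y=0$ oscillates rapidly in the region where $\Phi^{-1}(y)$ is small and generically matches onto the growing exponential at infinity rather than the decaying one; hence the eigenvalues $\lambda^2$ form a discrete set, as claimed.

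The main obstacle will be the singular endpoint at $y=0$: the paper's assumptions on $\Phi$ (continuity, monotonicity, $\Phi(0)=0$) are compatible with a wide range of behaviours of $\Phi^{-1}(y)$ near the origin, and both the existence of a canonical "regular" solution $g^{0}_{\lambda}$ and its analytic dependence on $\lambda^2$ rely on $y=0$ being at worst a regular singular point of the ODE. I would therefore assume, in line with the explicit example $\Phi^{-1}(y) \sim y$ offered after the boundary-condition claim, that $\Phi^{-1}$ admits a well-behaved power-law expansion near zero; under this mild regularity hypothesis the Wronskian/analyticity argument above delivers discreteness, while more pathological $\Phi$ would require a separate case-by-case treatment.
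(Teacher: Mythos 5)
Your proposal is correct and takes essentially the same route as the paper, which asserts discreteness directly from the shooting observation that the regularity condition at $y=0$ and the exponential-decay condition at large $y$ cannot both be satisfied for arbitrary $\lambda$ (the paper offers no formal proof beyond this). Your Wronskian/analyticity argument is a sound rigorisation of that heuristic, and your caveat about the behaviour of $\Phi^{-1}(y)$ near $y=0$ correctly flags an assumption the paper makes only implicitly through its $\Phi^{-1}(y)\sim y$ Frobenius example.
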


Moreover, 

\begin{claim}[Stability of Evolution]
The eigenvalues $\lambda^2$ are all positive.
\end{claim}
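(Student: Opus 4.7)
The plan is to prove positivity via a Rayleigh quotient argument on the Sturm--Liouville form of Eq.~(\ref{eq:char_ODE}). The equation can be written as $-\sigma g_\lambda'' + V g_\lambda = \lambda^2 W(y) g_\lambda$ with constant potential $V = \sigma (k_0/2\sigma)^2 > 0$ and weight $W(y) = 1/\Phi^{-1}(y)$. Since $\Phi$ is continuous and monotonically increasing with $\Phi(0) = 0$, we have $\Phi^{-1}(y) > 0$ for $y > 0$, so $W(y) > 0$ on the physical domain $y \in (0, \infty)$. This casts the problem as a standard Sturm--Liouville eigenvalue problem with positive weight, and in particular $\lambda^2 \in \mathbb{R}$.

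To extract positivity, I would multiply Eq.~(\ref{eq:char_ODE}) by $g_\lambda(y)$ and integrate from $0$ to $\infty$, then integrate the second-derivative term by parts once:
\begin{equation}
-\sigma \bigl[ g_\lambda g_\lambda' \bigr]_0^\infty + \sigma \int_0^\infty (g_\lambda')^2 \, \dd y + \sigma \left( \frac{k_0}{2\sigma} \right)^2 \int_0^\infty g_\lambda^2 \, \dd y = \lambda^2 \int_0^\infty \frac{g_\lambda^2}{\Phi^{-1}(y)} \, \dd y .
\end{equation}
Both integrals on the left are manifestly non-negative, and the constant-potential integral is strictly positive whenever $g_\lambda \not\equiv 0$. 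The right-hand-side weight $1/\Phi^{-1}(y)$ is also strictly positive. Solving for $\lambda^2$ as a Rayleigh quotient then yields $\lambda^2 > 0$ provided the boundary terms vanish.

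The main obstacle, and the step requiring care, is showing that the boundary term $\sigma [g_\lambda g_\lambda']_0^\infty$ vanishes and that the integrals actually converge. At $y \to \infty$ this is straightforward: the previous claim imposes $g_\lambda(y) \sim \exp(-k_0 y/2\sigma)$, so both $g_\lambda$ and $g_\lambda'$ decay exponentially, killing the boundary contribution and guaranteeing convergence of all the integrals at infinity. At $y = 0$, I would invoke the regularity condition $g_\lambda(y)/\dot{h} = g_\lambda(y)/\Phi^{-1}(y)$ regular from the previous claim: if $\Phi^{-1}(y)$ vanishes like a power of $y$ as $y \to 0$ (as in the explicit case $\Phi^{-1}(y) \sim y$ discussed in the text, where the Frobenius analysis forces $g_\lambda(y) \sim y$), then $g_\lambda(0) = 0$, killing the boundary term there, while $g_\lambda^2/\Phi^{-1}(y)$ remains integrable near the origin.

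Finally I would note that $\lambda^2 = 0$ is excluded: setting $\lambda = 0$ in Eq.~(\ref{eq:char_ODE}) gives the constant-coefficient equation whose only exponentially-decaying solution at infinity is $g(y) \propto \exp(-k_0 y/2\sigma)$, which does not satisfy the regularity condition $g(0) = 0$ at the origin. Hence $\lambda^2 > 0$ strictly, confirming that $e^{-\lambda^2 t}$ decays in time and the evolution is stable.
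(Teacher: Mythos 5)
Your proof is essentially the paper's own argument: multiplying Eq.~(\ref{eq:char_ODE}) by $g_\lambda$, integrating over $(0,\infty)$, integrating by parts, and using the stated boundary conditions to obtain the positive Rayleigh-quotient identity. You simply spell out the vanishing of the boundary terms and the exclusion of $\lambda^2=0$, details the paper leaves implicit, so the approach is correct and the same.
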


\begin{proof}
Multiplying Eq.~(\ref{eq:char_ODE}) by $g(y)$, integrating both sides with $\int_0^{\infty} \dd y$, and using the boundary conditions, one obtains
\begin{eqnarray}
&&\sigma \int_0^{\infty} \dd y \left[ \partial_y g_{\lambda}(y) \right]^2
+ \sigma \left( \frac{k_0}{2 \sigma} \right)^2 \int_0^{\infty} \dd y \left[ g(y) \right]^2
\nonumber\\
&&=
\lambda^2 \int_0^{\infty} \dd y \frac{\left[g(y)\right]^2}{\Phi^{-1}(y)}
.
\end{eqnarray}
\end{proof}

The positivity of $\lambda^2$ ensures that the probability distribution $P(y,t)$ will not blow up as it evolves.

So what are the implications of the above claims? Since $P(y,t) = \left( 1/\dot{h} \right)\sum_{n=0}^{\infty} e^{-\lambda_n^2 t} e^{-k_0 y /2 \sigma} g_{\lambda}(y)$, for $t$ such that $(\lambda_1^2 - \lambda_0^2) t$ is sufficiently large, $P(y,t) \sim e^{-\lambda_0^2 t} e^{-k_0 y /2 \sigma} g_{\lambda}(y) / \dot{h}$. Using Claim~(\ref{claim:duration}), one finds that indeed $P_T(T=t) = - \sigma \partial_{\dot{h}} \left[ \dot{h} P(y,t) \right] = - \sigma \partial_{\dot{h}} \left[ G(y,t) \right]$, and one concludes

\begin{claim}
Define the time scale $\tau = \left( \lambda_1^2 - \lambda_0^2 \right)^{-1}$. If $t/\tau \gg 1$, then $P_T(T=t) \sim e^{-\lambda_0^2 t}$.
\label{claim:largeT}
\end{claim}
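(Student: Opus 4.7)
The plan is to exploit the eigenfunction expansion already obtained for $G(y,t) = \dot{h}\, P(y,t)$ and combine it with the formula for $P_T(T=t)$ given in Claim~(\ref{claim:duration}). Writing
\begin{equation}
G(y,t) = \sum_{n=0}^{\infty} c_n\, e^{-\lambda_n^2 t}\, e^{-k_0 y/2\sigma}\, g_{\lambda_n}(y),
\end{equation}
where the $c_n$ are the expansion coefficients determined by the initial condition $P(y,0) = \delta(y-y_0)$ and the eigenvalues are ordered $0 < \lambda_0^2 < \lambda_1^2 < \cdots$, the idea is that once $t \gg \tau = (\lambda_1^2 - \lambda_0^2)^{-1}$, every higher term is suppressed relative to the $n=0$ term by at least a factor $e^{-(\lambda_1^2-\lambda_0^2)t} = e^{-t/\tau}$, so the series is dominated by the ground-state mode.

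The concrete steps would be: first, apply Claim~(\ref{claim:duration}) with $\dot{w}(t)=0$ for $t>0$ to get $P_T(T=t) = -k_0 \dot{h}\,P(0,t) - \sigma\, \partial_y[\dot{h}\,P(0,t)] = -k_0 G(0,t) - \sigma\, \partial_y G(0,t)$. Second, note that $\dot{h}|_{y=0} = \Phi^{-1}(0) = 0$, and that the regularity boundary condition imposed earlier forces $g_{\lambda_n}(y)/\dot{h}$ to be regular at $y=0$, so $G(0,t)$ vanishes (it is $\dot h$ times something regular); hence the first term drops out and $P_T(T=t) = -\sigma\,\partial_y G(0,t)$. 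Third, substitute the eigenfunction expansion to obtain
\begin{equation}
P_T(T=t) = -\sigma \sum_{n=0}^{\infty} c_n\, e^{-\lambda_n^2 t} \left[ g_{\lambda_n}'(0) - \frac{k_0}{2\sigma} g_{\lambda_n}(0) \right].
\end{equation}
Factoring out $e^{-\lambda_0^2 t}$, the remaining sum consists of a leading constant plus terms each carrying a factor $e^{-(\lambda_n^2 - \lambda_0^2)t}$ with $n\geq 1$; for $t/\tau \gg 1$ these are exponentially small corrections, yielding $P_T(T=t) \sim e^{-\lambda_0^2 t}$.

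I expect the main obstacle to be the rigorous justification of two subsidiary points rather than the eigenmode dominance itself. The first is that the coefficient in front of $e^{-\lambda_0^2 t}$ is genuinely nonzero: this requires showing that the ground-state projection $c_0[g_{\lambda_0}'(0) - (k_0/2\sigma) g_{\lambda_0}(0)]$ does not vanish, which one would argue on Sturm--Liouville grounds since $g_{\lambda_0}$, being nodeless and regular at the origin but not identically zero there, has a nonzero derivative at $y=0$. The second is the termwise differentiation and summation interchange in a spectral expansion whose natural weight is $1/\Phi^{-1}(y)$, which is singular at $y=0$; a clean justification would recast Eq.~(\ref{eq:char_ODE}) as a self-adjoint Sturm--Liouville problem on $(0,\infty)$ with weight $1/\Phi^{-1}(y)$, verify that the operator has compact resolvent so that the eigenfunctions form a complete orthonormal basis, and then use the discreteness and spectral gap to control convergence uniformly on compact subsets including a neighbourhood of $y=0$. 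In a more physics-oriented presentation one can simply invoke the spectral gap and truncate the series, which is the spirit in which the rest of the paper proceeds.
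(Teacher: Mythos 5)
Your proposal is correct and follows essentially the same route as the paper: expand $G(y,t)=\dot h\,P(y,t)$ in the eigenmodes of Eq.~(\ref{eq:char_ODE}), use the spectral gap $\lambda_1^2-\lambda_0^2$ to argue ground-state dominance for $t\gg\tau$, and evaluate the probability current of Claim~(\ref{claim:duration}) at $\dot h=0$, where the drift term drops and only $-\sigma\,\partial_y G(0,t)$ survives, giving $P_T(T=t)\sim e^{-\lambda_0^2 t}$. Your additional remarks on the non-vanishing of the ground-state coefficient and on completeness of the Sturm--Liouville expansion go beyond the paper's (deliberately non-rigorous) presentation, which instead only flags the related caveat about the asymptotic eigenvalue spacing via Claim~(\ref{claim:eigenvalueLargeN}).
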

 
The timescale $\tau$ is independent of the size of the "kick" $w$. If $w$ is large, $T$ is naturally long, and the exponential behaviour should be more easily observable. Moreover, rigorously speaking, for our arguments preceding Claim~(\ref{claim:largeT}) to be true, we need to show that the spacing between successive eigenvalues is at least finite as $n \rightarrow +\infty$. For the latter we turn to the following non-rigorous claim,

\begin{claim}
For large $n$, $\lambda_n \rightarrow \frac{n \pi}{\delta}$, where $\delta = \int_0^{\infty} \dd y \left[\sigma \Phi^{-1}(y)\right]^{-1/2}$.
\label{claim:eigenvalueLargeN}
\end{claim}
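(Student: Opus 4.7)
The plan is to attack Eq.~(\ref{eq:char_ODE}) for large $\lambda$ by treating it as a one-dimensional Schr\"odinger-type eigenvalue problem and applying the WKB (Liouville--Green) approximation. First I would rewrite the ODE in the form
\begin{equation*}
\partial_y^2 g_\lambda(y) = \left[ \left(\frac{k_0}{2\sigma}\right)^2 - \frac{\lambda^2}{\sigma \Phi^{-1}(y)} \right] g_\lambda(y),
\end{equation*}
so that the bracketed quantity plays the role of $(V(y)-E)/\hbar^2$. Because $\Phi^{-1}(y)$ is monotonically increasing, the bracket has a single zero at the ``turning point'' $y^*(\lambda)$ determined by $\Phi^{-1}(y^*) = 4\sigma\lambda^2/k_0^2$; for $y<y^*$ the equation is oscillatory and for $y>y^*$ it is exponentially decaying. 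For large $\lambda$ one has $y^*(\lambda)\to\infty$.

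Next I would write down the standard WKB ansatz in the classically allowed region,
\begin{equation*}
g_\lambda(y) \sim \frac{1}{\sqrt{k(y)}} \cos\!\left( \int_{y^*}^{y} k(y') \, \dd y' - \frac{\pi}{4} \right), \qquad k(y)^2 = \frac{\lambda^2}{\sigma \Phi^{-1}(y)} - \left(\frac{k_0}{2\sigma}\right)^2,
\end{equation*}
and the matching decaying solution on the other side of $y^*$, connected by the usual Airy-function formula that supplies the $-\pi/4$ phase shift. The boundary condition at $y=0$ (from the earlier claim, $g_\lambda(y)/\dot h$ regular, which for the generic case $\Phi^{-1}(y)\sim y$ near $0$ means $g_\lambda(0)=0$) supplies a second phase condition. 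Combining these and imposing that the two half-line solutions agree yields the Bohr--Sommerfeld quantization
\begin{equation*}
\int_{0}^{y^*(\lambda_n)} k(y) \, \dd y = \left( n + c \right) \pi,
\end{equation*}
where $c$ is an $n$-independent phase constant collecting the $\pi/4$ from the turning point and whatever arises at $y=0$.

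Finally I would take $n\to\infty$ (equivalently $\lambda_n\to\infty$): the correction $(k_0/2\sigma)^2$ in $k(y)^2$ becomes negligible for $\lambda\gg k_0$, the turning point $y^*(\lambda_n)$ recedes to $+\infty$, and the integral simplifies to
\begin{equation*}
\int_{0}^{\infty} \frac{\lambda_n}{\sqrt{\sigma\,\Phi^{-1}(y)}}\, \dd y = \lambda_n \delta,
\end{equation*}
so that $\lambda_n \delta = n\pi + O(1)$, giving $\lambda_n \to n\pi/\delta$. The main obstacle I anticipate is not the leading WKB estimate but justifying that the contributions near $y=0$ and near the moving turning point $y^*(\lambda_n)$ remain $O(1)$ as $n\to\infty$: near $y=0$ one must handle the non-turning-point boundary carefully (the form of $\Phi^{-1}$ near $0$ controls the local behaviour of $g_\lambda$ and hence the fixed phase it contributes), and near $y^*$ one must verify that the Airy matching region shrinks to zero fast enough that it does not eat into the leading phase integral. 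These are the standard uniform-WKB issues; for the non-rigorous statement of the claim, indicating that both are $n$-independent is sufficient.
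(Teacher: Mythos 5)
Your WKB argument is sound and reaches the right answer, but it is a genuinely different route from the paper's. The paper's ``proof'' is a one-line citation: it observes that Eq.~(\ref{eq:char_ODE}) is a Sturm--Liouville problem with $p(y)=\sigma$, weight $r(y)=1/\Phi^{-1}(y)$, and quotes the standard asymptotic $\lambda_n \rightarrow n\pi/\delta$ with $\delta=\int \dd y\,[r(y)/p(y)]^{1/2}$ for a regular problem on a finite interval, applying it heuristically to the half-line. Your Bohr--Sommerfeld derivation instead obtains the same phase integral from scratch: the quantization condition $\int_0^{y^*(\lambda_n)} k(y)\,\dd y = (n+c)\pi$ with $k(y)^2 = \lambda^2/[\sigma\Phi^{-1}(y)] - (k_0/2\sigma)^2$ reduces, for $\lambda_n \gg k_0$, to $\lambda_n\delta = n\pi + O(1)$. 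What your approach buys is that it does not lean on a theorem whose hypotheses (finite interval, regular endpoints) are not actually met here --- the interval is $[0,\infty)$, the turning point $y^*(\lambda_n)$ recedes to infinity, and $1/\Phi^{-1}(y)$ is typically singular at $y=0$ --- and it makes transparent that finiteness of $\delta$ is precisely what keeps the leading phase integral proportional to $\lambda_n$, which is the point the paper emphasizes in the discussion following the claim. The loose ends you flag (the $O(1)$ phases at $y=0$ and at the moving turning point, and the replacement of $k(y)$ by $\lambda_n[\sigma\Phi^{-1}(y)]^{-1/2}$ uniformly up to $y^*$) are exactly the steps one would need to control for a rigorous statement, but since the paper itself presents the claim as non-rigorous, your level of justification matches --- and arguably exceeds --- the paper's.
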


\begin{proof}
For a standard Sturm-Liouville on a finite interval $[a,b]$,
\begin{equation}
\left[ p(x) f^{'}(x)\right]^{'} + q(x) f(x) = \lambda^2 r(x) f(x)
,
\end{equation} 
where $p(x)$ and $r(x)$ do not change sign, the eigenvalues behave asymptotically as, $\lambda_n \rightarrow n \pi / \delta$, where $\delta = \int_a^b\dd x \left[ r(x)/p(x) \right]^{1/2}$ \cite{binding2001existence}. 
\end{proof}

The implications of Claim~(\ref{claim:eigenvalueLargeN})are as follows. If the form of $\Phi^{-1}(y)$ is such that $\int_0^{\infty} \dd y \left[\sigma \Phi^{-1}(y)\right]^{-1/2}$ does not exist, we expect that it is possible for eigenvalues of large $n$ to be "crowded together" (i.e. that $\lambda_n^2$ does not scale as $n^2$), and our argument for the large-time behaviour of durations may be in doubt. On the other hand, to search for numerical evidence in support of our claims, Claim~(\ref{claim:eigenvalueLargeN}) suggests us to adopt a form of $\Phi^{-1}(y)$ such that $\int_0^{\infty} \dd y \left[\sigma \Phi^{-1}(y)\right]^{-1/2}$ exists, with large drive $w$ (which implies that the duration distribution is biased towards the large-time side). 

\subsubsection{Numerical Examples}

For the purpose of finding numerical support for our previous claims, we consider two systems

\begin{enumerate}
\item
System L1: $\Phi(\dot{h}) = \eta \dot{h} - \alpha \dot{h}^2$, $\Phi^{-1}(y) = \left[-\eta + \left( \eta^2 - 4 \alpha y \right)^{1/2} \right]/2 |\alpha|$, $\alpha = -0.25$, $\eta = 1.0$, $k_0 = 1.0$, $\sigma = 1.0$, under driving $w=100.0$
\item
System L2: $\Phi(\dot{h}) = \log\left( 1 + \dot{h} \right)/a$, $\Phi^{-1}(y) = \exp\left(a y\right)-1$, $a = 1.0$, $\eta = 1.0$, $k_0 = 1.0$, $\sigma = 1.0$, under driving $w=100.0$
\end{enumerate}

\begin{figure}
\centering
\subfigure[C.D.F.]{\includegraphics[scale=0.3]{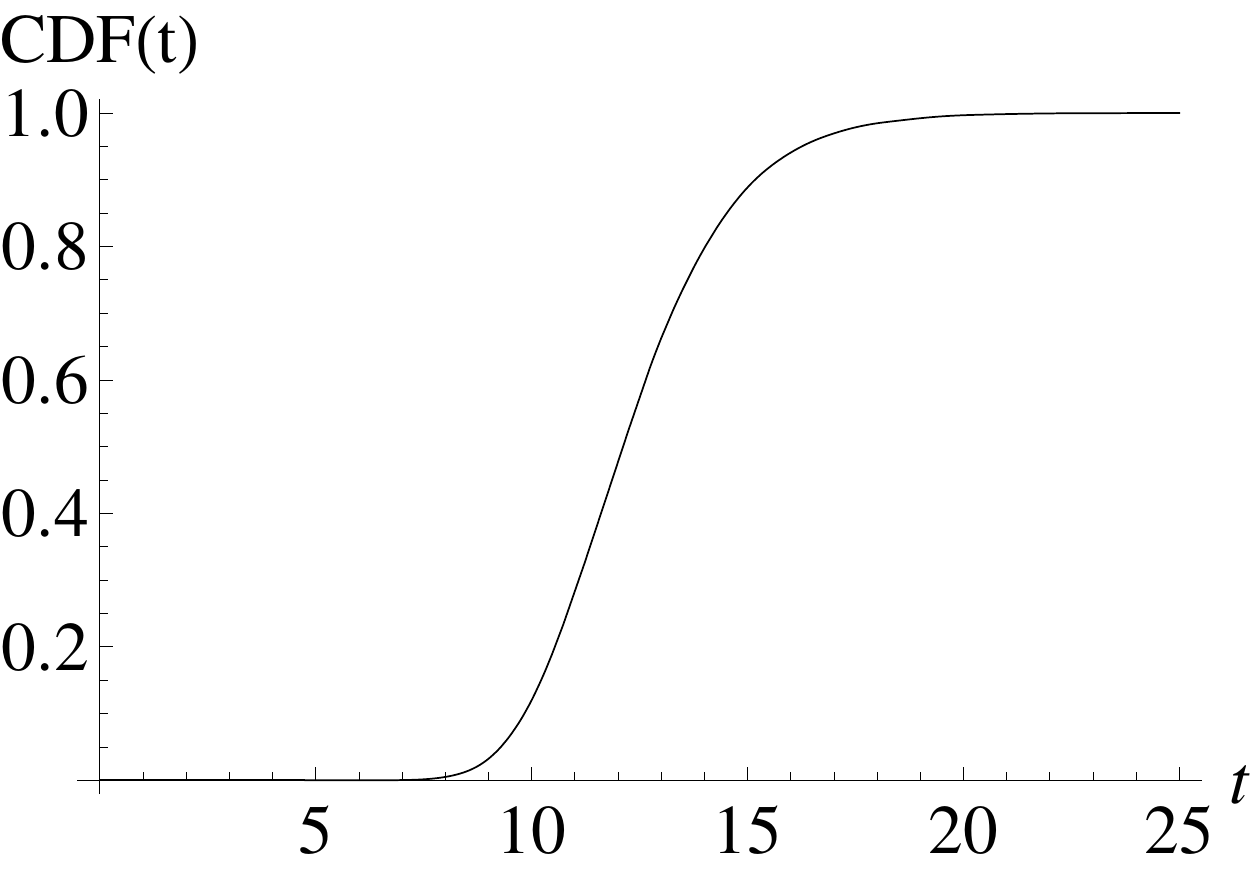}\label{fig:plotCDFquadratic}}\qquad
\subfigure[P.D.F.]{\includegraphics[scale=0.3]{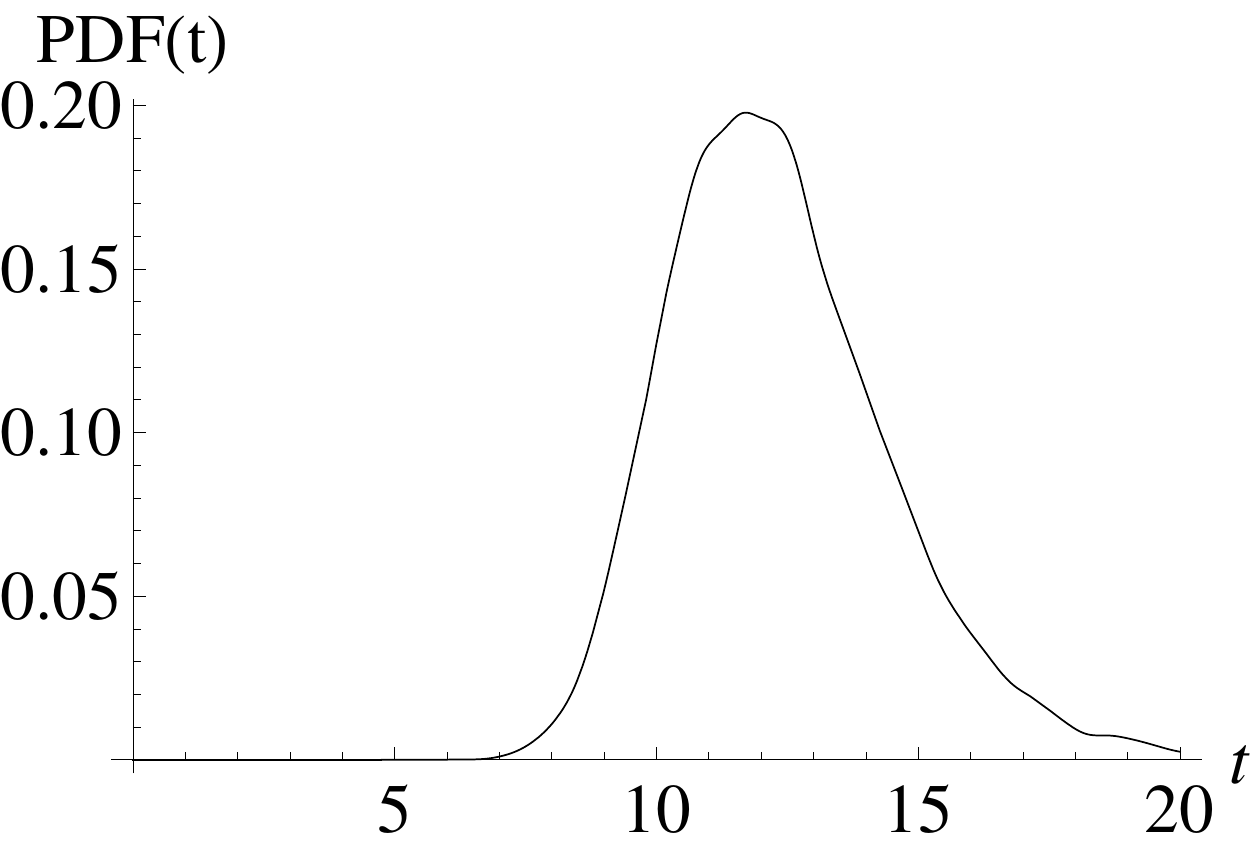}\label{fig:plotPDFquadratic}}\\
\subfigure[Log Plot of P.D.F.]{\includegraphics[scale=0.6]{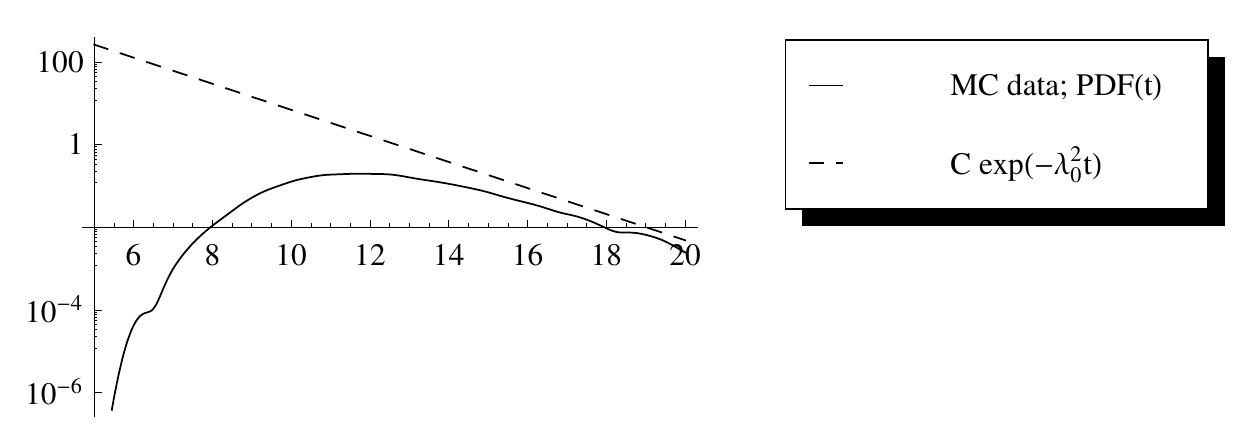}\label{fig:logPlotPDFquadratic}}%
\caption{Fig.~\ref{fig:plotCDFquadratic} shows the cumulative distribution function of duration $T$ for System L1. 20000 paths are sampled in the Monte-Carlo study. The figure shows that the support of the probability density function is roughly within the interval $[5,20]$. Fig.~\ref{fig:plotPDFquadratic} shows the corresponding probability distribution function for System L1. Fig.~\ref{fig:logPlotPDFquadratic} gives the log plot of the P.D.F. of duration $T$ for System L1. Shown also is the function $C \exp(-\lambda_0^2 t)$, where $\lambda_0$ is the smallest eigenvalue for the characteristic O.D.E. for System L1, for comparison}
\label{fig:system1}
\end{figure}

\begin{figure}
\centering
\subfigure[C.D.F.]{\includegraphics[scale=0.3]{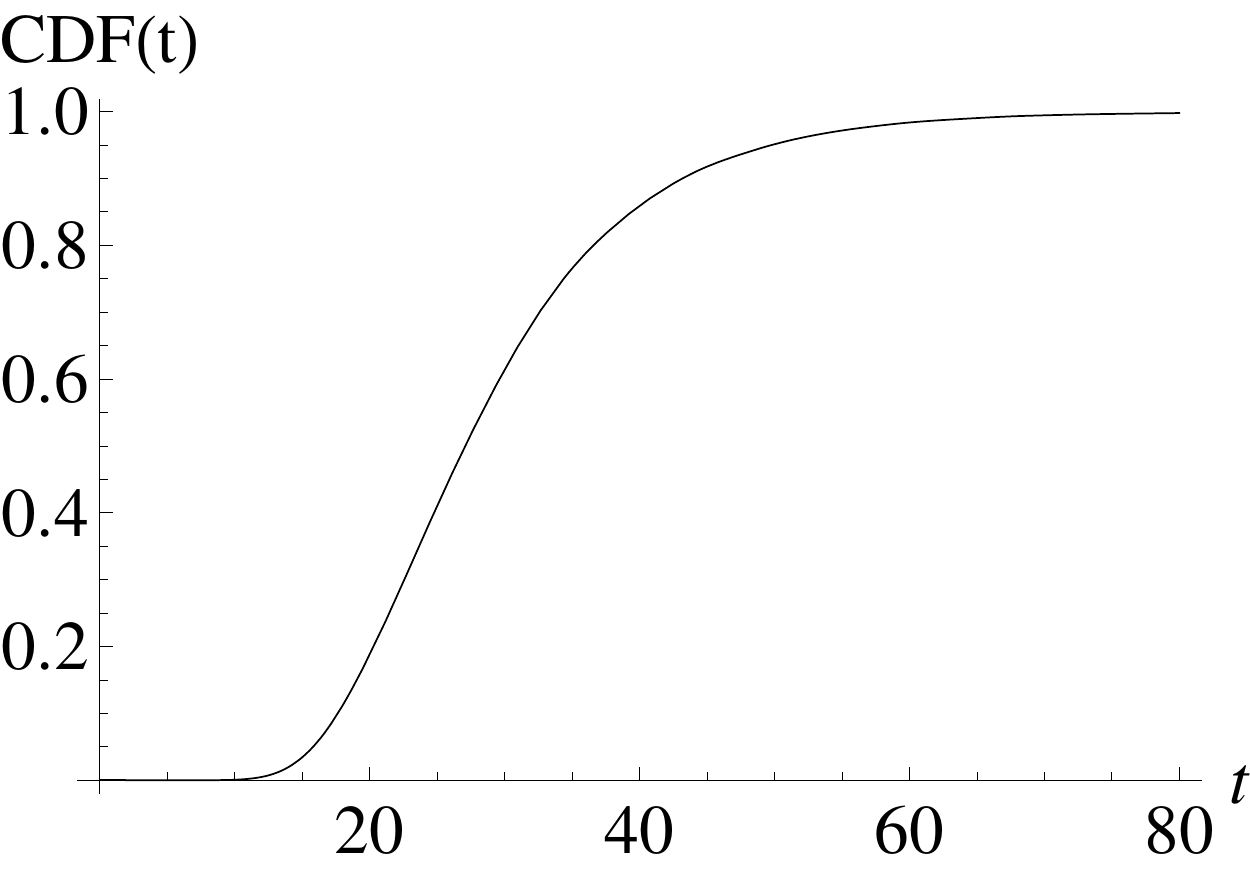}\label{fig:plotCDFexpModel}}\qquad
\subfigure[P.D.F.]{\includegraphics[scale=0.3]{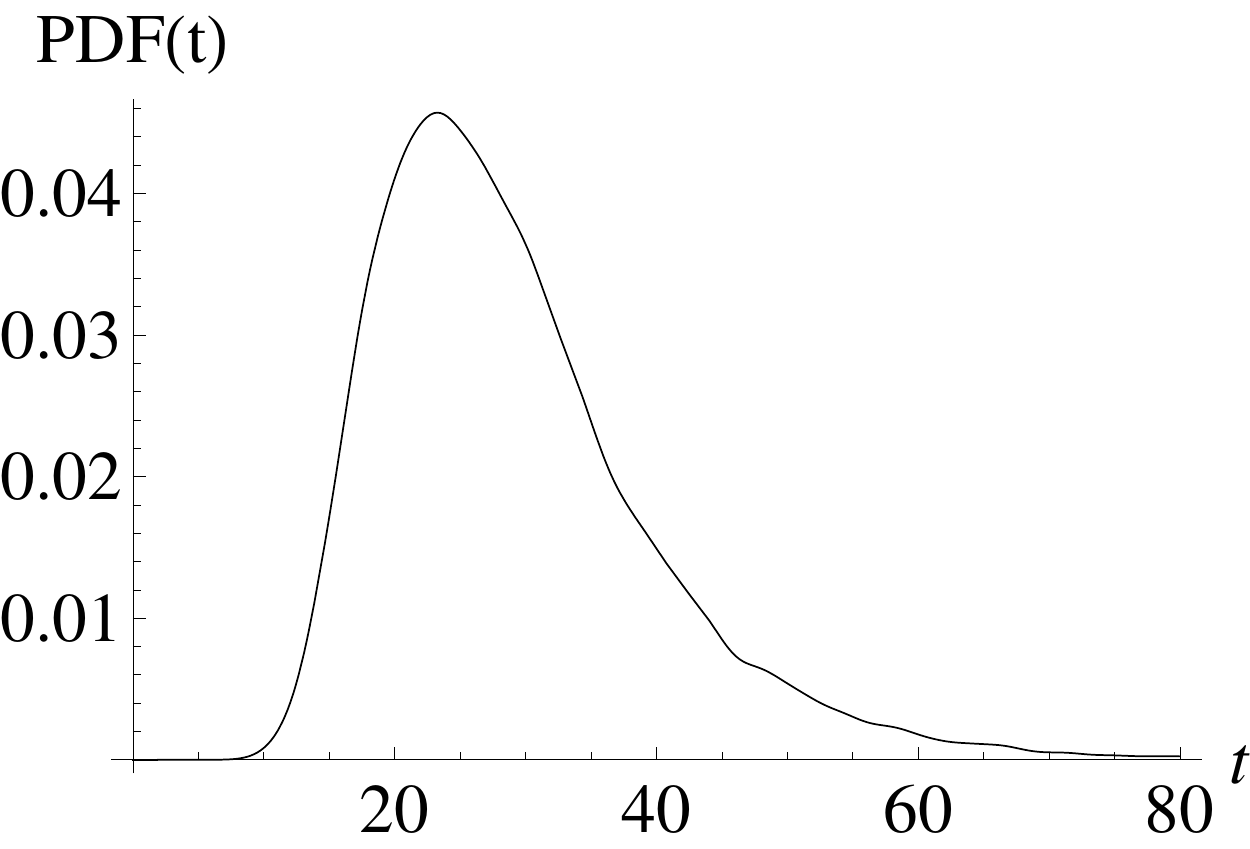}\label{fig:plotPDFexpModel}}\\
\subfigure[Log Plot of P.D.F.]{\includegraphics[scale=0.6]{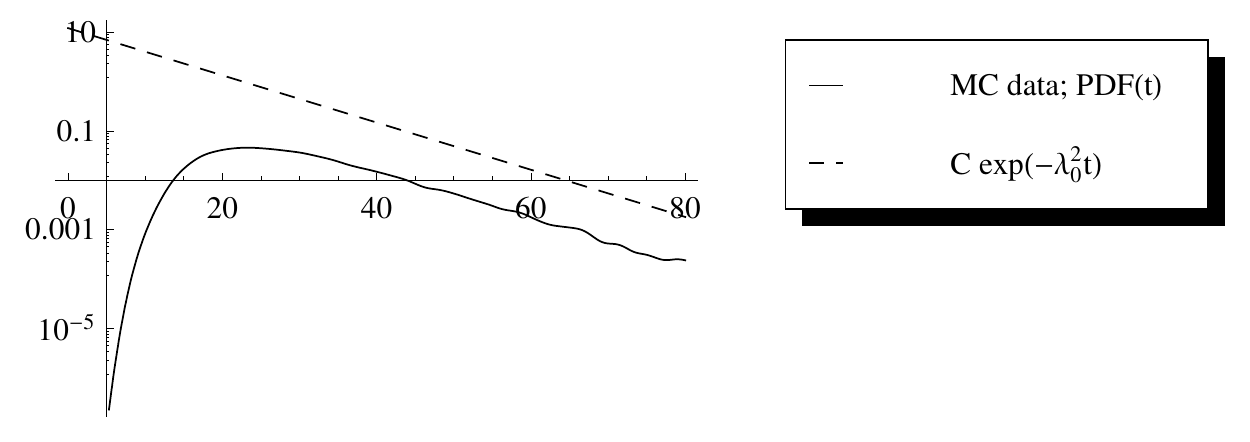}\label{fig:logPlotPDFexpModel}}%
\caption{Fig.~\ref{fig:plotCDFexpModel} shows the cumulative distribution function of duration $T$ for System L2. 20000 paths are sampled in the Monte-Carlo study. The figure shows that the support of the probability density function is roughly within the interval $[10,80]$. Fig.~\ref{fig:plotPDFexpModel} shows the corresponding probability distribution function for System L2. Fig.~\ref{fig:logPlotPDFexpModel} gives the log plot of the P.D.F. of duration $T$ for System L2. Shown also is the function $C \exp(-\lambda_0^2 t)$, where $\lambda_0$ is the smallest eigenvalue for the characteristic O.D.E. for System L2, for comparison}
\label{fig:system1}
\end{figure}

For System L1, the two smallest eigenvalues to Eq.~(\ref{eq:char_ODE}) are $\lambda_0^2 = 0.727$ and $\lambda_1^2 = 1.23$ respectively. Note that for System L1, $\int_0^{\infty} \dd y \left[\sigma \Phi^{-1}(y)\right]^{-1/2}$ does not exist. The characteristic timescale $\tau = 1/(\lambda_1^2 - \lambda_0^2) = 1.99$, and we expect the exponential behaviour to set in for $T \gg \tau = 1.99$. Figures~(\ref{fig:plotCDFquadratic}) and ~(\ref{fig:plotPDFquadratic}) show the cumulative distribution function (C.D.F.) and probability density function (P.D.F.) of durations from a Monte-Carlo simulation for System L1. The C.D.F. shows that the support of the P.D.F. is roughly within the interval $[5,20]$, and we thus expect the exponential behaviour to be observable (although the asymptotic behaviour of larger eigenvalues is unknown). Figure~(\ref{fig:logPlotPDFquadratic}) shows the log-plot of the P.D.F. for System L1, and the function $C e^{-\lambda_0^2 t}$. It is seen that the large-$t$ behaviour is reasonably well (though not perfectly) accounted for by the fit.

For System L2, the two smallest to Eq.~(\ref{eq:char_ODE}) are $\lambda_0^2 = 0.11$ and $\lambda_1^2 = 0.24$ respectively. Note that for this system $\int_0^{\infty} \dd y \left[\sigma \Phi^{-1}(y)\right]^{-1/2}$ exists. The characteristic timescale $\tau = 1/(\lambda_1^2 - \lambda_0^2) = 7.69$, and we expect the exponential behaviour to set in for $T \gg \tau = 7.69$. Figures~(\ref{fig:plotCDFexpModel}) and ~(\ref{fig:plotPDFexpModel}) show the cumulative distribution function (C.D.F.) and probability density function (P.D.F.) of durations from a Monte-Carlo simulation for System L1. The C.D.F. shows that the support of the P.D.F. is roughly within the interval $[10,80]$, and we thus expect the exponential behaviour to be observable. Figure~(\ref{fig:logPlotPDFquadratic}) shows the log-plot of the P.D.F. for System L1, and the function $C e^{-\lambda_0^2 t}$. It is again seen that the large-$t$ behaviour is quite well accounted for by the fit.

\subsection{Short-time Behaviour for Small "Kicks"}

To investigate the short-time behaviour of $P(y,t)$, let us recast the Fokker-Planck Equation for zero driving into a heat conduction equation on a finite interval.

\begin{claim}
The function $G(y,t) := \dot{h} P(t,y) = \Phi^{-1}(y) P(t,y)$ satisfies the following heat conduction equation,
\begin{equation}
\partial_t G(z,t) = \rho(z) \partial_z^2 G(z,t)
,
\end{equation}
where $\rho(z) = \sigma \left( 1 - \frac{k_0}{\sigma} z \right)^2 \Phi^{-1}(z)$ and $z = \frac{\sigma}{k_0} \left[ 1 - \exp\left( - \frac{k_0}{\sigma} y \right) \right]$.
\label{claim:heatConduction}
\end{claim}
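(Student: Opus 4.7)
My plan is to start from the zero-driving Fokker-Planck equation, Eq.~(\ref{eq:FP_zeroDriving}),
\begin{equation*}
-k_0 \partial_y[\dot{h} P(y,t)] - \sigma \partial_y^2[\dot{h} P(y,t)] + \partial_t P(y,t) = 0,
\end{equation*}
and substitute $G(y,t) = \dot{h}\,P(y,t) = \Phi^{-1}(y) P(y,t)$. Since $\Phi^{-1}(y)$ does not depend on $t$, I can write $\partial_t P = \partial_t G / \Phi^{-1}(y)$, so the equation becomes
\begin{equation*}
\partial_t G(y,t) = \Phi^{-1}(y)\bigl[\,k_0 \partial_y G + \sigma \partial_y^2 G\,\bigr].
\end{equation*}
At this stage we have an advection-diffusion equation in $y$ with a multiplicative $y$-dependent coefficient; the task is to remove the first-order advection term by a suitable change of variable.

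Next I would change variable from $y$ to $z(y) = \frac{\sigma}{k_0}\bigl[1 - \exp(-\frac{k_0}{\sigma} y)\bigr]$. Writing $u(y) := 1 - \frac{k_0}{\sigma} z(y) = \exp(-\frac{k_0}{\sigma} y)$, the chain rule gives $\partial_y = u\,\partial_z$ and $\partial_y u = -\frac{k_0}{\sigma} u$, so
\begin{equation*}
\partial_y G = u\,\partial_z G, \qquad \partial_y^2 G = -\frac{k_0}{\sigma} u\,\partial_z G + u^2\,\partial_z^2 G.
\end{equation*}
Plugging these in, the advective $k_0 u\,\partial_z G$ term is cancelled exactly by the $-k_0 u\,\partial_z G$ piece coming from $\sigma \partial_y^2 G$, leaving only $\sigma u^2\,\partial_z^2 G$. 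Therefore
\begin{equation*}
\partial_t G = \sigma\,\Phi^{-1}(y)\,u^2\,\partial_z^2 G = \sigma\bigl(1 - \tfrac{k_0}{\sigma} z\bigr)^2 \Phi^{-1}(y(z))\,\partial_z^2 G,
\end{equation*}
which is the claimed heat-conduction form with $\rho(z) = \sigma(1 - \frac{k_0}{\sigma} z)^2 \Phi^{-1}(y(z))$ (the notation $\Phi^{-1}(z)$ in the claim being read as $\Phi^{-1}$ evaluated at $y(z)$).

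The main content of the proof is really the guess of the change of variable: one is looking for a map $z=z(y)$ whose Jacobian $z'(y)$ satisfies $\sigma z''(y) + k_0 z'(y) = 0$, which forces $z'(y) \propto \exp(-k_0 y/\sigma)$ and, after fixing $z(0)=0$ and $z'(0)=1$, singles out the given transformation. The only minor subtlety I anticipate is bookkeeping on the domain: as $y$ runs over $[0,\infty)$, $z$ runs only over $[0,\sigma/k_0)$, so the heat equation is posed on a \emph{finite} interval, with the boundary $z=\sigma/k_0$ (where $\rho$ vanishes) corresponding to $y=+\infty$; this is in fact the feature the claim is invoking when it advertises ``heat conduction on a finite interval,'' and it will be the natural setting for the short-time analysis that follows.
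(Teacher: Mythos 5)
Your proof is correct and follows exactly the route the paper takes: the paper's own proof consists of the single instruction to start from Eq.~(\ref{eq:FP_zeroDriving}) and perform the stated substitution $z = \frac{\sigma}{k_0}\left[1-\exp\left(-\frac{k_0}{\sigma}y\right)\right]$, and your calculation simply carries out that change of variable explicitly, with the correct cancellation of the advection term and the correct reading of $\Phi^{-1}(z)$ as $\Phi^{-1}$ evaluated at $y(z)$. Your closing remarks on how the transformation is singled out and on the finite interval $[0,\sigma/k_0)$ are consistent with the paper's subsequent discussion.
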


\begin{proof}
Start from Eq.~(\ref{eq:FP_zeroDriving}) and perform the stated change of variable $z = \frac{\sigma}{k_0} \left[ 1 - \exp\left( - \frac{k_0}{\sigma} y \right) \right]$.
\end{proof}

The above equation encourages us to adopt the following views. If $w$ is small, then $\dot{h}(0) = \Phi^{-1}(k_0 w) $ will also be small, and so are $y(t=0) := y_0$ and $z(t=0) := z_0$. For $z \sim 0$, one may adopt the approximation $\rho(z) \approx C \sigma z^{\alpha}$, where $\Phi^{-1}(z) \approx C z^{\alpha}$ for small $z$. The range of validity of the approximation is as follows. Suppose $z_{\textrm{min}}$ is the first smallest $z$ for which $\rho_{\textrm{approx}}(z)$ deviates significantly from $\rho(z)$. We solve $G_{\textrm{approx}}(z,t)$ using $\rho_{\textrm{approx}}(z)$, subjected to the boundary condition $G_{\textrm{approx}}(z,0) = \Phi^{-1}(z) \delta(z-z_0)$. Then the approximation is valid for $t$ such that $G_{\textrm{approx}}(z_{\textrm{min}},t) \approx 0$. (i.e. A particle starting at $z=z_0$ has negligible chance of "diffusing" to $z_{\textrm{min}}$.)

From a physical standpoint, we are interested in cases where $\Phi(\dot{h}) \sim \dot{h}^N$, in which $N$ is an integer. For $N=1$, we retrieve the case for a velocity-independent dissipation coefficient. If $\Phi(\dot{h}) \sim \dot{h}^N$, then $\Phi^{-1}(y) \sim y^{1/N}$ and also $\Phi^{-1}(z) \sim z^{1/N}$ (from the transformation in Claim~(\ref{claim:heatConduction})). Assuming $\Phi^{-1}(z) \approx (z/C_N)^{1/N}$ for small $z$, one is led to consider the following heat conduction equation,
\begin{equation}
\partial_t G(z,t) = \sigma (z/C_N)^{1/N} \partial_z^2 G(z,t)
.
\end{equation}
Moreover, at first we need to consider a boundary condition for $G(z,t)$ at $z=1$, namely, that $G(z,t)$ be regular at $z=1$, but now this boundary condition seems to be unimportant, for as long as $t$ is such that $G(z_{\textrm{min}},t) \approx 0$, where $z_{\textrm{min}} \ll 1$. 

\begin{claim}
The solution to the equation,
\begin{equation}
\partial_t G(z,t) = \sigma (z/C_N)^{1/N} \partial_z^2 G(z,t),
\end{equation}
subjected to the boundary conditions that 
\begin{enumerate}
\item 
$G(z,t)$ be regular at $z=0$, and that
\item
 $G(z,0) = \Phi^{-1}(z_0) \left( 1-\frac{k_0}{\sigma} z\right)^{-1} \delta(z-z_0)$,
\end{enumerate}
is given by,
\begin{widetext}
\begin{equation}
G(z,t) =
\left( 1 - \frac{1}{2 N} \right) D_N^2 z_0^{\frac{1}{2}-\frac{1}{N}} \left( 1 -\frac{k_0}{\sigma} z_0 \right)^{-1} \Phi^{-1}(z_0) 
\frac{\sqrt{z}}{t} \exp\left\{ -\frac{D_N^2}{t} \left[ z_0^{\frac{2 N -1}{N}} + z^{\frac{2 N -1}{N}} \right] \right\} I_{\frac{N}{2 N -1}} \left( \frac{D_N^2 z_0^{1-\frac{1}{2 N}} z^{1-\frac{1}{2 N}}}{2 t} \right)
,
\end{equation} 
\end{widetext}
where $D_N = \frac{2 N}{2 N - 1} \left( \frac{C_N^{\frac{1}{N}}}{\sigma} \right)^{\frac{1}{2}}$.
\label{claim:GsmallTime}
\end{claim}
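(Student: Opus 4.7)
The plan is to reduce this variable-diffusivity heat equation to the standard heat equation associated with a Bessel operator of appropriate index, for which a closed-form Green's function is available via Weber's second exponential integral.

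The first step is the change of variables
\begin{equation}
u = z^{(2N-1)/(2N)}, \qquad G(z,t) = z^{1/2} H(u,t),
\end{equation}
which is forced by two requirements: for a diffusivity of the form $z^{1/N}$ the substitution $u = z^{(2N-1)/(2N)}$ is the unique monomial that absorbs the explicit $z$-dependence into the second-derivative term, and the $z^{1/2}$ prefactor is then chosen so that the first-order term generated by the chain rule is converted into a clean $-\nu^2/u^2$ mass-type term. A careful but routine application of the chain rule should then turn the PDE into
\begin{equation}
\partial_t H = D\left[\partial_u^2 H + \frac{1}{u}\partial_u H - \frac{\nu^2}{u^2} H\right],
\end{equation}
with $D = (\sigma/C_N^{1/N})((2N-1)/(2N))^2$ and $\nu = N/(2N-1)$, so that $1/D = D_N^2$ by the stated definition of $D_N$.

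Next, I will recognise the right-hand side as the Bessel operator of index $\nu$, whose eigenfunctions regular at $u=0$ are $J_\nu(\lambda u)$. Expanding $H(u,t)$ as a Hankel-type integral $\int_0^\infty A(\lambda)\, e^{-D\lambda^2 t}\, J_\nu(\lambda u)\, \lambda\, \dd\lambda$, the regularity condition on $G/\Phi^{-1}$ at $z=0$ picks out $J_\nu$ in preference to $Y_\nu$. The delta initial condition on $G$ translates (with the appropriate Jacobian of $z\mapsto u$) into a delta on $H$ at $u_0 = z_0^{(2N-1)/(2N)}$, and its Hankel inversion gives $A(\lambda)\propto J_\nu(\lambda u_0)$. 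Weber's second exponential integral
\begin{equation}
\int_0^\infty e^{-p\lambda^2}\, J_\nu(\lambda u)\, J_\nu(\lambda u_0)\, \lambda\, \dd\lambda = \frac{1}{2p}\exp\!\left(-\frac{u^2+u_0^2}{4p}\right) I_\nu\!\left(\frac{u u_0}{2p}\right)
\end{equation}
applied with $p = Dt$ then yields the closed form. Finally one substitutes back $u = z^{(2N-1)/(2N)}$, reinserts the $z^{1/2}$ prefactor, and collects constants; the outcome should carry the $\sqrt{z}/t$ envelope, the exponential in $z^{(2N-1)/N}+z_0^{(2N-1)/N}$, and the $I_{N/(2N-1)}$ factor with argument proportional to $(zz_0)^{(2N-1)/(2N)}/t$ as stated.

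The main obstacle is not conceptual but bookkeeping: producing the overall prefactor $(1-\tfrac{1}{2N})D_N^2 z_0^{1/2-1/N}(1-k_0 z_0/\sigma)^{-1}\Phi^{-1}(z_0)$ exactly requires tracking the Jacobian of $z\mapsto u$, the $z^{1/2}$ rescaling applied symmetrically to both $z$ and $z_0$, and the strength of the initial delta after both the $y\mapsto z$ change of variables used in Claim~(\ref{claim:heatConduction}) and the $z\mapsto u$ change used here. A perhaps cleaner alternative is to write the stated formula as an ansatz and verify directly that it satisfies the PDE; then the small-argument asymptotics $I_\nu(x)\sim x^\nu/[2^\nu\Gamma(\nu+1)]$ together with the standard Gaussian collapse show that the ansatz reproduces the correct delta source as $t\to 0^+$, closing the proof.
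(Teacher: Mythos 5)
Your proposal is correct and follows essentially the same route as the paper: expansion in the Bessel eigenfunctions $\sqrt{z}\,J_{N/(2N-1)}\bigl(\mathrm{const}\cdot\lambda\,z^{1-\frac{1}{2N}}\bigr)$ regular at the origin, Hankel inversion of the delta initial condition to get $A(\lambda)\propto J_{\nu}(\lambda\,\cdot)$, and Weber's second exponential integral to produce the $I_{N/(2N-1)}$ closed form; your preliminary substitution $u=z^{(2N-1)/(2N)}$, $G=z^{1/2}H$ is just an explicit repackaging of the paper's separated O.D.E. solution. The only difference is presentational, so no further comment is needed.
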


\begin{proof}
First let us consider the initial condition for $G(z,0)$. We want the initial probability distribution to satisfy $P(y,0) = \delta( y-y_0 )$. Since $z = \frac{\sigma}{k_0} \left[ 1 - \exp\left( - \frac{k_0}{\sigma} y \right) \right]$, $P(z,0) =  \left( 1-\frac{k_0}{\sigma} z\right)^{-1} \delta( z - z_0 )$. Now as $G(z,t) = \Phi^{-1}(y) P(z,t)$, the initial condition follows.

To obtain the solution, write $G(z,t) = \int \dd \lambda e^{-\lambda^2 t} A(\lambda) r_{\lambda}(z)$. Then $r_{\lambda}(z)$ satisfies the equation,
\begin{equation}
-\lambda^2 r_{\lambda}(z) = \frac{\sigma}{C_N^{1/N}} z^{1/N} \partial_z^2 r_{\lambda} (z)
. 
\end{equation}
The solution which is regular at $z=0$ is given by $r_{\lambda}(z) = \sqrt{z} J_{\frac{N}{2 N - 1}}\left( \frac{N}{2 N - 1} \sqrt{\frac{C_N^{1/N}}{\sigma}} \lambda z^{1-\frac{1}{2 N}} \right)$, where $J_{\nu}(x)$ is the Bessel function of order $\nu$. Thus,
\begin{eqnarray}
G(z,t) &=& \int_0^{\infty} \dd \lambda A(\lambda) e^{-\lambda^2 t}
\nonumber\\
&&\times \sqrt{z} J_{\frac{N}{2 N - 1}}\left( \frac{N}{2 N - 1} \sqrt{\frac{C_N^{1/N}}{\sigma}} \lambda z^{1-\frac{1}{2 N}} \right)
.
\end{eqnarray}
Using the identity
\begin{equation}
\int_0^{\infty} \dd x x J_{\alpha}(x \lambda) J_{\alpha} (x \lambda^{'}) = \delta(\lambda-\lambda^{'})/\lambda
,
\end{equation}
and the initial boundary condition $G(z,0)$, one obtains
\begin{eqnarray}
A(\lambda) &=& \lambda \left( 1-\frac{1}{2 N} \right) D_N^2 z_0^{\frac{1}{2}-\frac{1}{N}} \left( 1 - \frac{k_0}{\sigma} z_0 \right)^{-1} 
\nonumber\\
&& \times J_{\frac{N}{2 N - 1}}\left( \lambda D_N z_0^{1-\frac{1}{2 N}} \right)
.
\end{eqnarray} 
Thus,
\begin{eqnarray}
G(t,z) 
&=&
\int_0^{\infty} \dd \lambda e^{-\lambda^2 t} \lambda 
\nonumber\\
&& \times \lambda \left( 1-\frac{1}{2 N} \right) D_N^2 z_0^{\frac{1}{2}-\frac{1}{N}} \left( 1 - \frac{k_0}{\sigma} z_0 \right)^{-1} 
\nonumber\\
&& \times J_{\frac{N}{2 N - 1}}\left( \lambda D_N z_0^{1-\frac{1}{2 N}} \right)
\nonumber\\
&& \times \sqrt{z} J_{\frac{N}{2 N - 1}}\left( \frac{N}{2 N - 1} \sqrt{\frac{C_N^{1/N}}{\sigma}} \lambda z^{1-\frac{1}{2 N}} \right)
.
\end{eqnarray}
The integration over $\lambda$ in the previous expression may be performed with the following relation (See. Eq.~(48) in \cite{colaiori2008exactly}),
\begin{equation}
\int_0^{\infty} \dd k k J_{\nu} (k \epsilon) J_{\nu} ( k x ) = \frac{1}{t} \exp\left( - \frac{x^2+\epsilon^2}{t} \right) I_{\nu} \left( \frac{\epsilon x}{2 t} \right)
,
\end{equation}
where $I_{\nu}(x)$ is the Neumann function of order $\nu$. The end result is the expression for $G(z,t)$ as stated in the claim.
\end{proof}

So what is the significance of the previous claim? According to Claim~(\ref{claim:duration}), the duration  distribution $P_T(T=t)$ is related to the value of the probability current, $J(\dot{h},t) = -k_0 \dot{h} P(\dot{h},t) - \sigma \partial_y \left[ \dot{h} P(\dot{h},t) \right] = -k_0 G(z,t) - \sigma \partial_z G(z,t)$, at $\dot{h}=0$ (i.e. $z=0$). Thus, the knowledge of $G(z,t)$ enables us to calculate the duration distribution. Since the solution for $G(z,t)$ in Claim~(\ref{claim:GsmallTime}) is expected to be valid for small enough time $t$, we claim that

\begin{claim}
For small enough time $t$, the duration distribution $P_T(T=t)$ scales with $t$ and $z_0$ as $z_0 t^{-1-\frac{N}{2 N-1}} \exp\left[ -\frac{D_N^2}{t} \left( z_0^{\frac{2 N-1}{N}} + z^{\frac{2 N-1}{N}} \right) \right]$, where $z_0 = \frac{\sigma}{k_0} \left[ 1 - \exp\left( - \frac{k_0}{\sigma} y_0 \right) \right]$ and $y_0 = k_0 w$. 
\end{claim}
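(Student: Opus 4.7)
The plan is to combine Claim~(\ref{claim:duration}) with the explicit small-time solution of Claim~(\ref{claim:GsmallTime}) and simply extract the leading behaviour as $z\to 0$. First I would rewrite the probability current $J = -k_0 \dot h P - \sigma \partial_y[\dot h P]$ in terms of $G(z,t) = \dot h P$, using $dz/dy = 1 - (k_0/\sigma) z$, which equals $1$ at $y=0$ (i.e.\ $z=0$). This gives
\begin{equation}
P_T(T=t) \;=\; -k_0\, G(0,t) \;-\; \sigma\, \partial_z G(0,t).
\end{equation}
Because the explicit $G(z,t)$ of Claim~(\ref{claim:GsmallTime}) carries an overall prefactor $\sqrt{z}$ (and the Bessel factor $I_{N/(2N-1)}$ has an argument that vanishes at $z=0$), the boundary value $G(0,t)$ is zero, leaving $P_T(T=t) = -\sigma\, \partial_z G(0,t)$.

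Next I would expand the modified Bessel function for small argument, $I_\nu(x) \sim (x/2)^\nu / \Gamma(\nu+1)$. The key arithmetic is that $\left(1 - \tfrac{1}{2N}\right)\cdot\tfrac{N}{2N-1} = \tfrac{1}{2}$, so
\begin{equation}
I_{\frac{N}{2N-1}}\!\left(\frac{D_N^2 z_0^{1-\frac{1}{2N}} z^{1-\frac{1}{2N}}}{2t}\right) \;\propto\; \frac{z_0^{1/2}\, z^{1/2}}{t^{N/(2N-1)}}.
\end{equation}
Combining with the explicit $\sqrt{z}/t$ already present, I get $G(z,t) \sim A(z_0)\, z\, t^{-1 - N/(2N-1)} \exp\!\left[-D_N^2 z_0^{(2N-1)/N}/t\right]$ for small $z$, where the second term $z^{(2N-1)/N}$ in the exponential of the stated formula manifestly vanishes at the evaluation point $z=0$.

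Third, I would tidy up the powers of $z_0$. The prefactor from Claim~(\ref{claim:GsmallTime}) contributes $z_0^{1/2-1/N}$; since the kick is small, $\Phi^{-1}(z_0) \approx (z_0/C_N)^{1/N}$ contributes $z_0^{1/N}$; and the Bessel expansion contributes $z_0^{1/2}$. These multiply to $z_0^{1}$, while $(1-k_0 z_0/\sigma)^{-1} = 1 + O(z_0)$ may be dropped to leading order. Taking $-\sigma\partial_z$ at $z=0$ then just strips the single power of $z$, yielding
\begin{equation}
P_T(T=t) \;\propto\; z_0\, t^{-1 - \frac{N}{2N-1}} \exp\!\left[-\frac{D_N^2\, z_0^{(2N-1)/N}}{t}\right],
\end{equation}
which matches the stated scaling at $z=0$.

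The main obstacle is delimiting the regime of validity. The scaling is meaningful only when (i) $t$ is small enough that the approximation $\rho(z)\approx \sigma(z/C_N)^{1/N}$ used in Claim~(\ref{claim:GsmallTime}) remains justified, i.e.\ the diffusing density has negligible weight beyond the $z_{\textrm{min}}$ introduced after that claim, and (ii) the small-argument expansion of $I_\nu$ applies, which requires $D_N^2 z_0^{1-1/(2N)} z^{1-1/(2N)}/(2t)$ to be small throughout the support of interest. Both conditions are consistent with the exponential factor $\exp[-D_N^2 z_0^{(2N-1)/N}/t]$ being non-negligible, i.e.\ $t \lesssim D_N^2 z_0^{(2N-1)/N}$, and will need to be stated explicitly. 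Away from this window, subleading Bessel corrections and the full $\rho(z)$ must be restored, and one recovers the large-$t$ exponential of Claim~(\ref{claim:largeT}).
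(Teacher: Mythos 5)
Your proposal follows essentially the same route as the paper's own proof: take the explicit $G(z,t)$ of Claim~\ref{claim:GsmallTime}, expand $I_\nu(x)\sim x^\nu$ for small argument so that the $z$ and $z_0$ powers combine (via $(1-\tfrac{1}{2N})\cdot\tfrac{N}{2N-1}=\tfrac12$ and $\Phi^{-1}(z_0)\sim z_0^{1/N}$) to give $G\sim z_0\,z\,t^{-1-\frac{N}{2N-1}}e^{-D_N^2 z_0^{(2N-1)/N}/t}$, and then read off the current at $z=0$ using Claim~\ref{claim:duration}. Your additional bookkeeping (noting $G(0,t)=0$ so only the $-\sigma\partial_z G$ term survives, and the explicit discussion of the validity window) only makes explicit what the paper leaves implicit, so the argument is correct and equivalent.
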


\begin{proof}
From the expression for $G(z,t)$ in Claim~(\ref{claim:GsmallTime}), and the fact that $I_{\nu}(x) \sim x^{\nu}$ for small $x$, one finds that
\begin{eqnarray}
G(t,z) &\sim& z_0^{\frac{1}{2}-\frac{1}{N}} \Phi^{-1}(z_0) \frac{1}{t} \exp\left\{ -\frac{D_N^2}{t} \left[ z_0^{\frac{2 N -1}{N}} + z^{\frac{2 N -1}{N}} \right] \right\} 
\nonumber\\
&&\times \sqrt{z} \frac{z^{\frac{1}{2}} z_0^{\frac{1}{2}}}{t^{\frac{N}{2 N-1}}}
\nonumber\\
&\sim& z_0 t^{-1-\frac{N}{2 N-1}} z \exp\left\{ -\frac{D_N^2}{t} \left[ z_0^{\frac{2 N -1}{N}} + z^{\frac{2 N -1}{N}} \right] \right\}
.
\end{eqnarray}
From the previous expression for $G(t,z)$, it is easy to calculate that 
\begin{equation}
J(t,z=0) \sim z_0 t^{-1-\frac{N}{2 N-1}} \exp\left\{ -\frac{D_N^2}{t} \left[ z_0^{\frac{2 N -1}{N}} + z^{\frac{2 N -1}{N}} \right] \right\},
\end{equation}
and the result follows from Claim~(\ref{claim:duration}).
\end{proof}

We note that it is a bit tricky to observe the small-time behaviour numerically. The original "conductivity" in Claim~(\ref{claim:heatConduction}) is given by $\rho(z) = \sigma \left( 1 - \frac{k_0}{\sigma} z \right)^2 \Phi^{-1}(z)$ and we have replaced it by $\sigma (z/C_N)^{1/N} $, ignoring the boundary condition at $z=1$. To do this, the drive $w$ should be small (such that the conductivity is dominated by $\sigma (z/C_N)^{1/N}$, and the duration $T$ is naturally small), $\sigma$ should be small (such that the conductivity is naturally small, and a particle starting at small $z_0$ has small chance of "diffusing" to $z=1$), and $\frac{k_0}{\sigma}$ should be small (such that the term $\left( 1 - \frac{k_0}{\sigma} z \right)^2$ is unimportant). 

\subsubsection{Numerical Examples}

We shall study two systems,

\begin{enumerate}
\item System S1:$\Phi(\dot{h}) = \dot{h}^2$, $\Phi^{-1}(y) = y^{1/2}$, $k_0=1.0$, $\sigma=1.0$, under driving $w=0.01$. (Thus, $N=2$)
\item System S2:$\Phi(\dot{h}) = \dot{h}^4$, $\Phi^{-1}(y) = y^{1/4}$, $k_0=1.0$, $\sigma=1.0$, under driving $w=0.01$. (Thus, $N=4$)
\end{enumerate}

\begin{figure}
\centering
\subfigure[C.D.F.]{\includegraphics[scale=0.3]{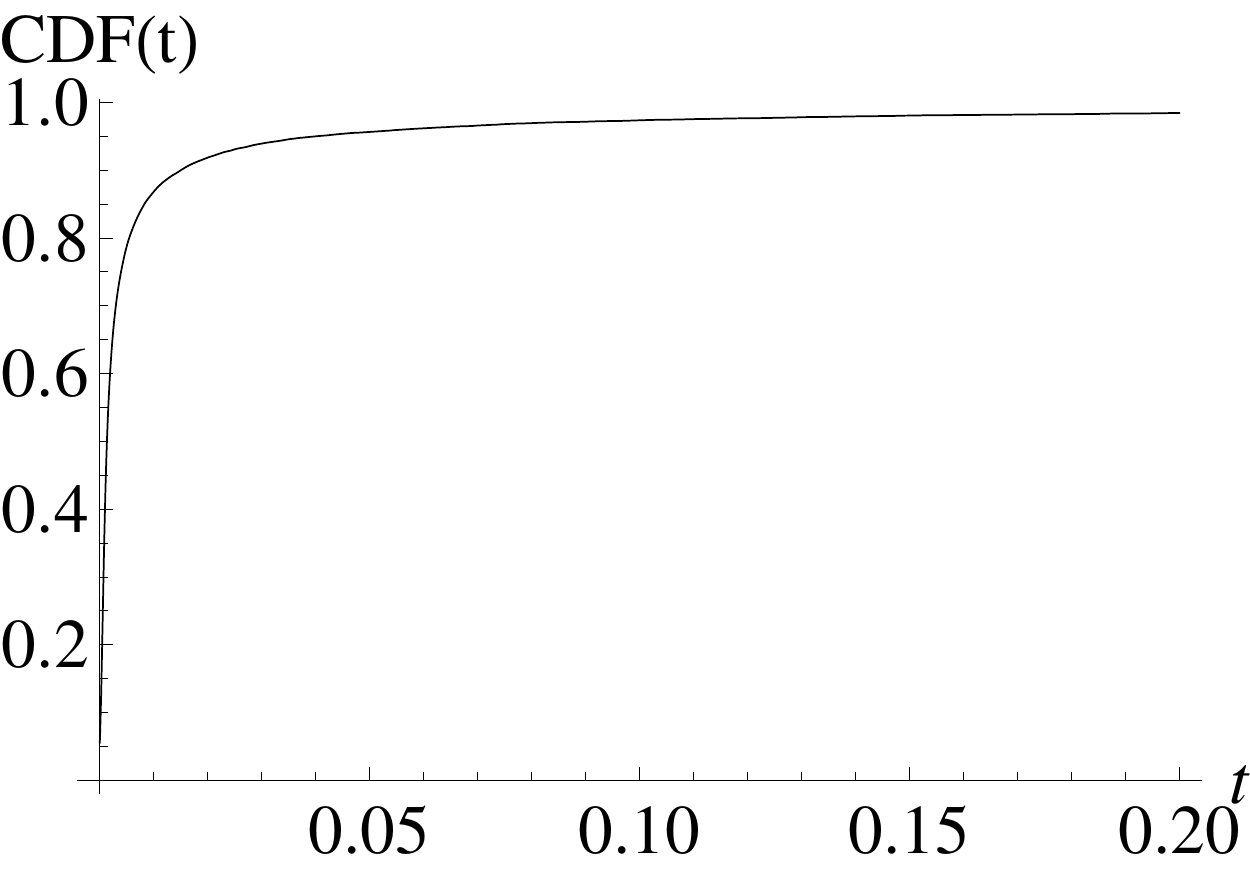}\label{fig:plotCDForder2}}\qquad
\subfigure[P.D.F.]{\includegraphics[scale=0.3]{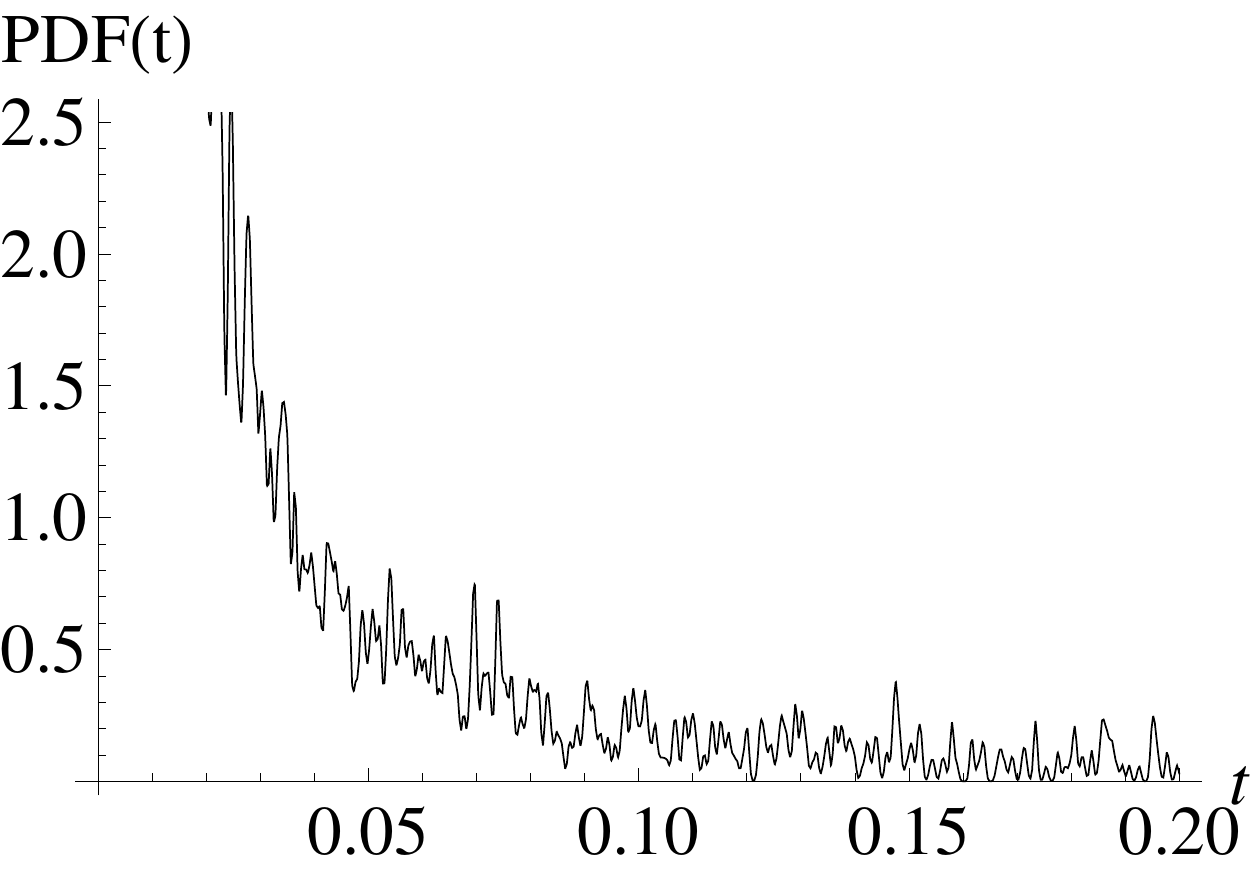}\label{fig:plotPDForder2}}\\
\subfigure[Log-log Plot of P.D.F.]{\includegraphics[scale=0.6]{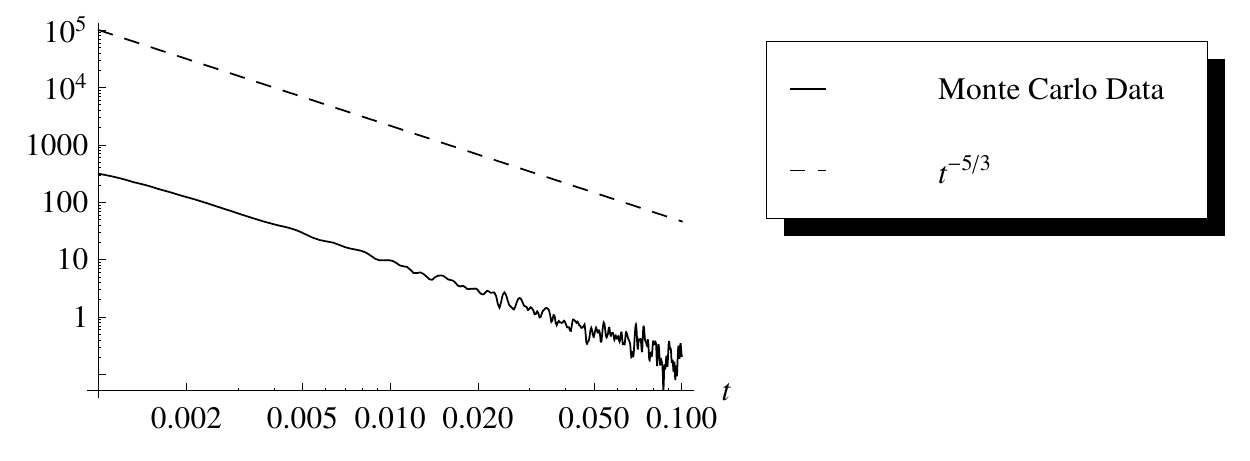}\label{fig:plotOrder2}}%
\caption{Fig.~\ref{fig:plotCDForder2} shows the cumulative distribution function of duration $T$ for System S1. 20000 paths are sampled in the Monte-Carlo study. The figure shows that the support of the probability density function is roughly within the interval $[0,0.1]$. Fig.~\ref{fig:plotPDForder2} shows the corresponding probability distribution function for System S1. Fig.~\ref{fig:plotOrder2} gives the log-log plot of the P.D.F. of duration $T$ for System S1. Shown also is the function $t^{-1 - N/(2 N+1)}$, where $N=2$, for comparison}
\label{fig:system1}
\end{figure}

\begin{figure}
\centering
\subfigure[C.D.F.]{\includegraphics[scale=0.3]{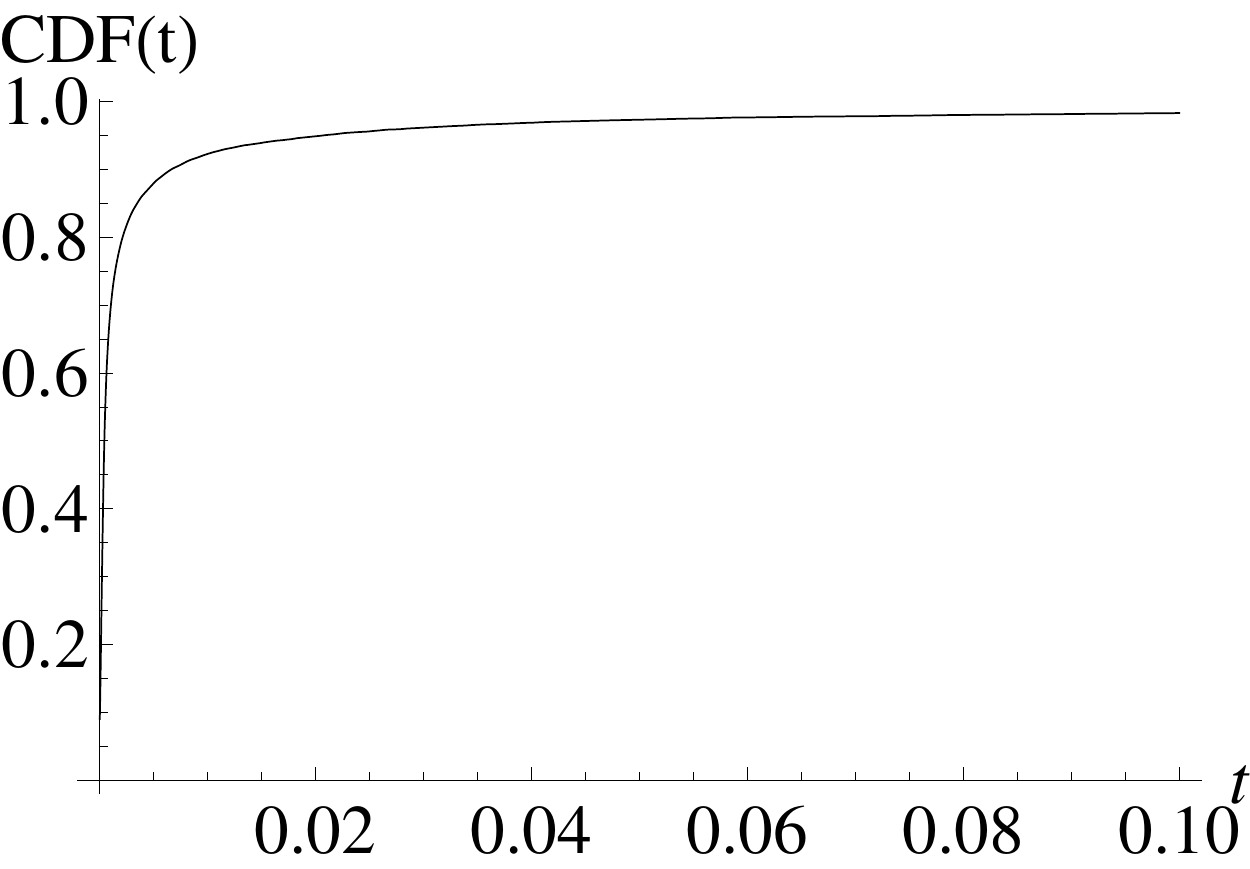}\label{fig:plotCDForder4}}\qquad
\subfigure[P.D.F.]{\includegraphics[scale=0.3]{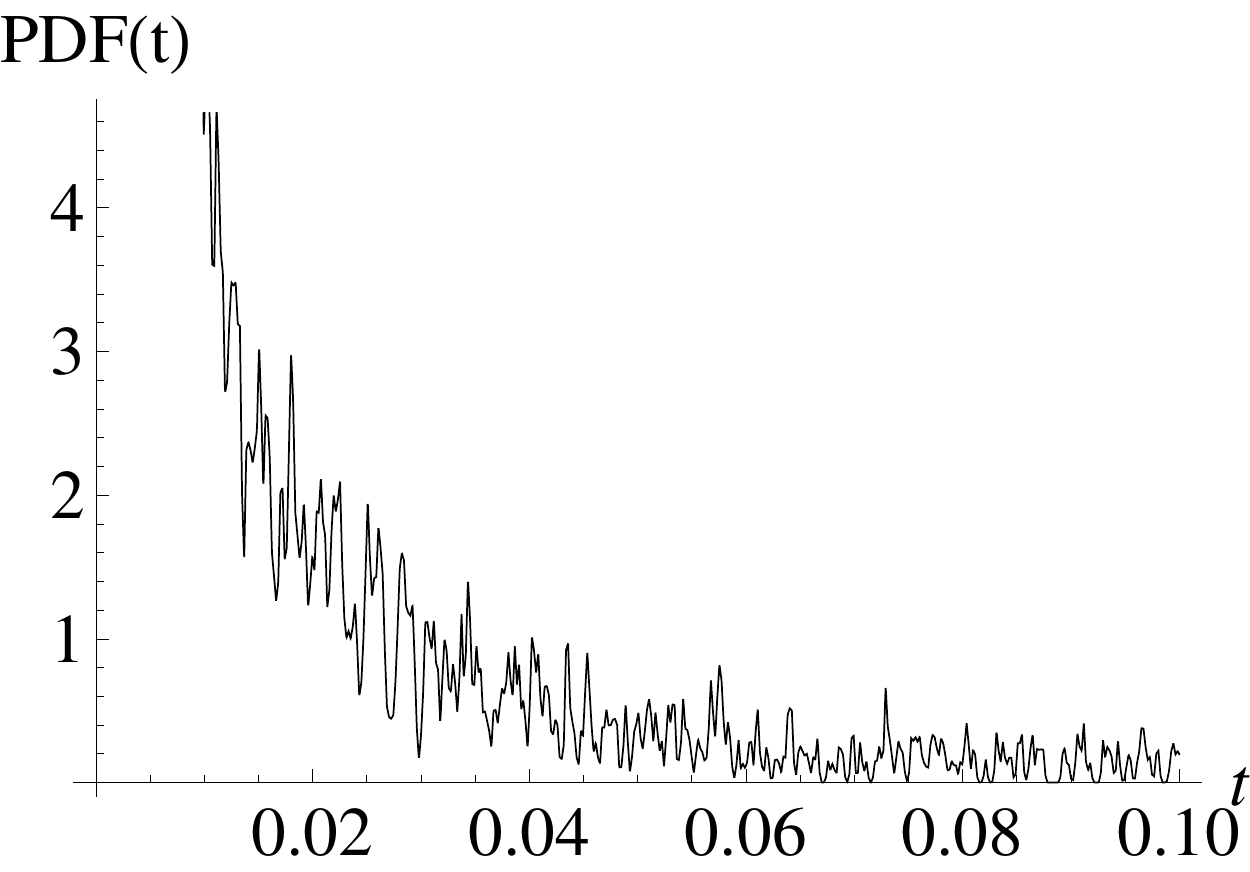}\label{fig:plotPDForder4}}\\
\subfigure[Log-log Plot of P.D.F.]{\includegraphics[scale=0.6]{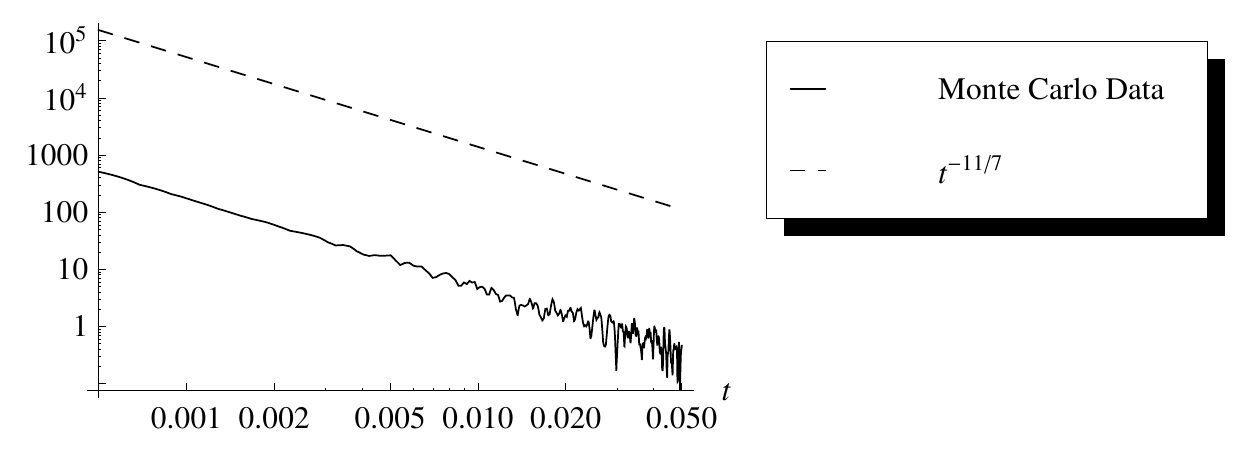}\label{fig:plotOrder4}}%
\caption{Fig.~\ref{fig:plotCDForder4} shows the cumulative distribution function of duration $T$ for System S2. 20000 paths are sampled in the Monte-Carlo study. The figure shows that the support of the probability density function is roughly within the interval $[0,0.1]$. Fig.~\ref{fig:plotPDForder4} shows the corresponding probability distribution function for System S2. Fig.~\ref{fig:plotOrder4} gives the log-log plot of the P.D.F. of duration $T$ for System S2. Shown also is the function $t^{-1 - N/(2 N+1)}$, where $N=4$, for comparison}
\label{fig:system1}
\end{figure}

Figures~(\ref{fig:plotCDForder2}) and ~(\ref{fig:plotPDForder2}) show the cumulative distribution function (C.D.F.) and probability density function (P.D.F.) of durations from a Monte-Carlo simulation for System S1. Figure~(\ref{fig:plotOrder2}) shows the log-log plot of P.D.F. of durations. It is shown that the data are fitted quite well with the function $t^{-1 - N/(2 N+1)}$, where $N=2$. Figures~(\ref{fig:plotCDForder4}), ~(\ref{fig:plotPDForder4}) and ~(\ref{fig:plotOrder4}) show similiar information for System S2. 







\section{Conclusion}
\label{sec:conclusion}

In this paper, we noted some general features common to systems described by the ABBM model with a velocity-dependent dissipative coefficient $\eta$. Some questions arise. First and foremost, is there any physical system which can be described, at least phenomenologically, by our model with a non-trivial $\eta$? For example, is there any physical system exhibiting the behaviour $\Phi(\dot{h}) \sim \dot{h}^2$? We mention that for large Reynolds number the drag is proportional to the squared velocity rather the velocity (See the chapter "The drag crisis" in \cite{landau2013fluid}), which may or may not be relevant to the current problem. Moreover, can equations similiar to ours be derived based on microscopic considerations? Although a quick answer may be negative due to the high non-linearity introduced by a velocity-dependent $\eta$, one cannot rule out the possibility that it arises as an "effective" (though not necessarily "exact") description of the picture. We leave these questions to future studies. 

\nocite{*}

\bibliography{apssamp}

\end{document}